\newtheorem{theorem}{Theorem}
\theoremstyle{plain}
\newtheorem{definition}{Definition}
\newtheorem{lemma}{Lemma}
\newtheorem{proposition}{Proposition}
\newtheorem{remark}{Remark}
\numberwithin{equation}{section}
\begin{document}
\title{Non-Archimedean Coulomb Gases}
\author{Sergii M. Torba}
\email{storba@math.cinvestav.edu.mx}

\author{W. A. Z\'{u}\~{n}iga-Galindo}
\email{wazuniga@math.cinvestav.edu.mx}
\address{Centro de Investigaci\'{o}n y de Estudios Avanzados del Instituto
Polit\'{e}cnico Nacional\\
Departamento de Matem\'{a}ticas, Unidad Quer\'{e}taro\\
Libramiento Norponiente \#2000, Fracc. Real de Juriquilla. Santiago de
Quer\'{e}taro, Qro. 76230\\
M\'{e}xico.}

\thanks{The first named author was partially supported by Conacyt Grant No. 222478. The second named author was partially supported by Conacyt Grant No. 250845.}
\subjclass[2000]{Primary 82D05; Secondary 82B21, 60B10, 11Q25, 46S10}

\begin{abstract}
This article aims to study the Coulomb gas model over the $d$-dimen\-sional
$p$-adic space. We establish the existence of equilibria measures and the
$\Gamma$-limit for the Coulomb energy functional when the number of
configurations tends to infinity. For a cloud of charged particles confined
into the unit ball, we compute the equilibrium measure and the minimum of its
Coulomb energy functional. In the $p$-adic setting the Coulomb energy is the
continuum limit of the minus a hierarchical Hamiltonian attached to a spin
glass model with a $p$-adic coupling.

\end{abstract}
\keywords{Coulomb gas, equilibrium measure, ultrametricity, $p$-adic analysis}
\maketitle
\tableofcontents

\section{Introduction}

In this article we initiate the study of Coulomb gases on the $d$-dimensional
$p$-adic space $\mathbb{Q}_{p}^{d}$. More precisely, we give $p$-adic
counterparts of the existence and characterization of the equilibrium measure,
see Theorem \ref{Theorem_1}, the $\Gamma$-convergence of the Coulomb energy
functional, see Theorem \ref{Theo_1A}, and the convergence of the minimizers
of this functional, see Theorem \ref{Theorem_2}. For the classical
counterparts the reader may consult, for instance, Serfaty's book
\cite[Proposition 2.8, Theorems 2.1, 2.2]{Serfaty-Book}.

From a mathematical perspective, the results presented here are framed in the
probability and potential theory over ultrametric spaces. Probability over
ultrametric spaces has been studied extensively during the last thirty years,
see e.g. \cite{Bendikov1}, \cite{Bendiko t al}, \cite{D-K-K-V} and the
references therein, due, among several things, to the emergence of the use of
ultrametric spaces in physical models, see e.g. \cite{KKZuniga}, \cite{R-T-V},
\cite{V-V-Z}, \cite{Zuniga-LNM-2016} and the references therein. On the other
hand, the study of potential theory over locally compact Abelian groups, see
e.g. \cite{Berg-Forst}, and over metric spaces, see e.g. \cite{Adams et al},
\cite{Bjorn et al}, is a classical matter.

From a physical perspective, the study of models over ultrametric spaces
started in the middle 80s with the works of Frauenfelder, Parisi, Stein, among
others, see e.g. \cite{D-K-K-V}, \cite{Fraunfelder et al}, \cite{M-P-V},
\cite{R-T-V}, see also \cite{Av-4}, \cite{Av-5}, \cite{Khrennikov2},
\cite{KKZuniga}, \cite{Zuniga-LNM-2016}, and the references therein. The key
idea is that the space of states of certain physical systems have a natural
structure of ultrametric space. An ultrametric space $(M,d)$ is a metric space
$M$ with a distance satisfying the strong triangle inequality $d(A,B)\leq
\max\left\{  d\left(  A,C\right)  ,d\left(  B,C\right)  \right\}  $ for any
three points $A$, $B$, $C$ in $M$.

The Ising models over ultrametric spaces have been studied intensively, see
e.g. \cite{DysonFreeman}, \cite{Gubser et al}, \cite{Khrennikov et al},
\cite{Khrennikov2}, \cite{Lerner-Missarov}, \cite{Mis}, \cite{Mukhamedov-1},
\cite{Mukhamedov-2}, \cite{Mukhamedov-3}, \cite{Sinai} and the references
therein, \ motivated, among several things, by the hierarchical Ising model
introduced in \cite{DysonFreeman}. The hierarchical Hamiltonian introduced by
Dyson in \cite{DysonFreeman} can be naturally studied in $p$-adic spaces, see
e.g. \cite{Lerner-Missarov}, \cite{Gubser et al}. These Hamiltonians are
self-similar with respect to suitable scale groups.

A $p$-adic number is a series of the form%
\begin{equation}
x=x_{-k}p^{-k}+x_{-k+1}p^{-k+1}+\ldots+x_{0}+x_{1}p+\ldots,\text{ with }%
x_{-k}\neq0\text{,} \label{p-adic-number}%
\end{equation}
where $p$ denotes a fixed prime number, and the $x_{j}$s \ are $p$-adic
digits, i.e. numbers in the set $\left\{  0,1,\ldots,p-1\right\}  $. There are
natural field operations, sum and multiplication, on series of form
(\ref{p-adic-number}). The set of all possible $p$-adic numbers constitutes
the field of $p$-adic numbers $\mathbb{Q}_{p}$. There is also a natural norm
in $\mathbb{Q}_{p}$ defined as $\left\vert x\right\vert _{p}=p^{k}$, for a
nonzero $p$-adic number $x$ of the form (\ref{p-adic-number}). The field of
$p$-adic numbers with the distance induced by $\left\vert \cdot\right\vert
_{p}$ is a complete ultrametric space. The ultrametric property refers to the
fact that $\left\vert x-y\right\vert _{p}\leq\max\left\{  \left\vert
x-z\right\vert _{p},\left\vert z-y\right\vert _{p}\right\}  $ for any $x$,
$y$, $z$ in $\mathbb{Q}_{p}$. As a topological space, $\left(  \mathbb{Q}%
_{p},\left\vert \cdot\right\vert _{p}\right)  $ is completely disconnected,
i.e. the connected components are points. The field of $p$-adic numbers has a
fractal structure, see e.g. \cite{A-K-S}, \cite{V-V-Z}. We extend the $p$-adic
norm to $\mathbb{Q}_{p}^{d}$, by taking $\left\Vert \left(  x_{1},\ldots
,x_{d}\right)  \right\Vert =\max_{i}\left\vert x_{i}\right\vert _{p}$.

For $\alpha>0$, the $d$-dimensional $p$-adic Coulomb kernel is defined as%

\[
g_{\alpha}\left(  x\right)  =%
\begin{cases}
||x||_{p}^{\alpha-d}, & \text{ if }\alpha\neq d\\
\ln||x||_{p}, & \text{ if }\alpha=d.
\end{cases}
\]
This kernel is similar to the classic one, however, in the $p$-adic setting,
we have a family of kernels parametrized by $\alpha>0$. In this article we
consider only the kernels $\frac{1}{||x||_{p}^{d-\alpha}}$, with $d>\alpha$.
The function $g_{\alpha}\left(  x\right)  $ is the fundamental solution of a
`$p$-adic Poisson's equation.' In the $p$-adic framework, there are infinitely
many `Laplacians'. By a Laplacian we mean an operator $A$ such that the
semigroup generated by $-A$ is Markovian. We pick the simplest possible
Laplacian in dimension $d$, the Taibleson operator $\boldsymbol{D}^{\alpha}$,
$\alpha>0$, which is a pseudodifferential operator defined as $\mathcal{F}%
\left(  \boldsymbol{D}^{\alpha}\varphi\right)  =||\xi||_{p}^{\alpha
}\mathcal{F}\varphi$, where $\mathcal{F}$ denotes the Fourier transform.
Notice that here $\alpha$ is an arbitrary positive number, while in the
classical case, similar operators exist only if $\alpha\in\left(  0,2\right]
$. If we consider $g_{\alpha}\left(  x\right)  $ as distribution, then
\[
\boldsymbol{D}^{\alpha}g_{\alpha}=-C_{d,\alpha}\delta,
\]
where $\delta$ denotes the Dirac distribution at the origin and $C_{d,\alpha}%
$\ is a constant, see Section \ref{Section_Taibleson_operator}.

Let $\mathcal{P}(\mathbb{Q}_{p}^{d})$ be the space of probability measures on
$\mathbb{Q}_{p}^{d}$. The Coulomb energy of the measure $\mu$\ is defined as
\[
\mathcal{E}_{\alpha}(\mu)=\int_{\mathbb{Q}_{p}^{d}}\,\int_{\mathbb{Q}_{p}^{d}%
}g_{\alpha}\left(  x-y\right)  d\mu\left(  x\right)  d\mu\left(  y\right)
\in\left(  -\infty,+\infty\right]  .
\]
Now we introduce and admissible potential $V:\mathbb{Q}_{p}^{d}\rightarrow
\left(  -\infty,+\infty\right]  $ satisfying the standard conditions. For this
potential we consider the Coulomb energy functional%
\[
I_{\alpha}\left(  \mu\right)  =\mathcal{E}_{\alpha}(\mu)+\int_{\mathbb{Q}%
_{p}^{d}}V\left(  x\right)  d\mu\left(  x\right)  .
\]
We show the existence of a unique minimizer $\mu_{0}$ ($\min_{\mu
\in\mathcal{P}(\mathbb{Q}_{p}^{d})}\left\{  I_{\alpha}\left(  \mu\right)
\right\}  =I_{\alpha}\left(  \mu_{0}\right)  $) called the equilibrium
measure, see Theorem \ref{Theorem_1}.

Since $\left(  \mathbb{Q}_{p}^{d},\left\Vert \cdot\right\Vert _{p}\right)  $
is a Polish space, we can use classical probability techniques to establish
Theorem \ref{Theorem_1}. This result is a $p$-adic version of the Frostman
theorem, see e.g. \cite[Theorem 2.1]{Serfaty-Book}. In the case $V\equiv0$,
this result is well-known in the context of locally compact Abelian groups,
see e.g. \cite[Theorem 16.22]{Berg-Forst}.

At first sight, Theorem \ref{Theorem_1} is not very different of the classical
one. However, there are several important differences, among them, suitable
locally constant functions are admissible potentials; second, the ultrametric
topology of $\mathbb{Q}_{p}^{d},$ imposes new restrictions on the equilibria
measures; and third, operator $\boldsymbol{D}^{\alpha}$\ is non local. This
last fact makes the computation of the equilibrium measures very difficult.

Consider the potential%
\[
V(x)=
\begin{cases}
V_{0}, & \text{if } \left\Vert x\right\Vert _{p}\leq1\\
+\infty, & \text{if } \left\Vert x\right\Vert _{p}>1,
\end{cases}
\]
where $V_{0}>0$. The energy functional $I(\mu)$ corresponds to a cloud of
charged particles confined into the unit ball. In Proposition \ref{Prop_7}, we
compute the equilibrium measure $\mu_{0}$ for $I(\mu)$. \ In the classical
approach, one applies the Laplacian to an equality of the form%
\[
\int_{\mathbb{Q}_{p}^{d}}g_{\alpha}\left(  x-y\right)  d\mu_{0}\left(
x\right)  +\frac{V}{2}=C,\qquad\text{q.e. in the support of }\mu_{0},
\]
see Theorem \ref{Theorem_1}, to obtain a formula for $\mu_{0}\left(  x\right)
$ in an open set contained in the support of $\mu_{0}$. In the $p$-adic case,
this approach is not possible due to the fact that the operator
$\boldsymbol{D}^{\alpha}$\ is non local, see Section
\ref{Section_Coulomb_gas_unit_ball} and Proposition \ref{Prop_7}.

The Hamiltonian $H_{n,\alpha}(x_{1},\ldots,x_{n})$ of the Coulomb gas
corresponding to the configuration $x_{1},\ldots,\allowbreak x_{n}%
\in\mathbb{Q}_{p}^{d}$ is defined as%
\[
H_{n,\alpha}(x_{1},\ldots,x_{n})=\sum_{i\neq j}g_{\alpha}\left(  x_{i}%
-x_{j}\right)  +n\sum_{i}V(i).
\]
Under the assumptions that $V$ is continuous and bounded from below, and that
$g_{\alpha}(x)=\frac{1}{\left\Vert x\right\Vert _{p}^{d-\alpha}}$, with
$d>\alpha>0$, we show that $\frac{1}{n^{2}}H_{n,\alpha}$ \ $\Gamma$-converges
to $I_{\alpha}\left(  \mu\right)  $, see Theorem \ref{Theo_1A}, i.e.
$I_{\alpha}\left(  \mu\right)  $ is the mean-field energy of $\frac{1}{n^{2}%
}H_{n,\alpha}$.

We also consider the configurations $x_{1},\ldots,x_{n}$ minimizing the
corresponding Hamiltonians $H_{n,\alpha}$, $n\in\mathbb{N}$ and show that
$\frac{1}{n}\sum_{i=1}^{n}\delta_{x_{i}}\rightarrow\mu_{0}$ in the weak sense
of probability measures, and that $\lim_{n\rightarrow+\infty}\frac{1}{n^{2}%
}H_{n,\alpha}\left(  x_{1},\ldots,x_{n}\right)  =I(\mu_{0})$, see Theorem
\ref{Theorem_2}.

For $L\in Z$ fixed and $l\geq-L$, set $G_{l}=p^{-L}\mathbb{Z}_{p}^{d}%
/p^{l}\mathbb{Z}_{p}^{d}$, where $\mathbb{Z}_{p}^{d}$ is the $d$-dimensional
unit ball. Then $G_{l}$ is naturally a finite ultrametric space. Consider the
Hamiltonian%
\[
H_{L,l}:=-\sum_{\widetilde{x},\,\widetilde{y}\in G_{l}}
p^{-2ld}J_{\widetilde{x}\,\widetilde{y}}\rho\left(  \widetilde
{x}\right)  \rho\left(  \widetilde{y}\right)  -
\sum_{\widetilde{x}\in G_{l}}
p^{-ld}\rho\left(  \widetilde{x}\right)  V_{0}\left(  \widetilde{x}\right)  ,
\]
where $\rho$ and $V_{0}$ are real-valued functions and the coupling
$J_{\widetilde{x}\,\widetilde{y}}$ is given by%
\[
J_{\widetilde{x}\,\widetilde{y}}=
\begin{cases}
\left\Vert \widetilde{x}-\widetilde{y}\right\Vert _{p}^{\alpha-d}, & \text{if } \widetilde{x}\neq\widetilde{y},\\
\frac{p^{-l\left(  d+\alpha\right)  }\left(  1-p^{-d}\right)  }{1-p^{-\alpha}},
& \text{if } \widetilde{x}=\widetilde{y}.
\end{cases}
\]
Then $H_{L,l}$ is the Hamiltonian of a spin glass model with $p$-adic
coupling, see \cite[Section C]{Gubser et al} . Under general conditions about
functions $\rho$ and $V_{0}$, $\ $we obtain that $-\lim_{l\rightarrow\infty
}H_{L},_{l}$ agrees with the Coulomb energy attached to the measure $\rho\, dx$ and
a potential which is infinite outside the ball $p^{-L}\mathbb{Z}_{p}^{d}$, and
that agrees with the function $V_{0}$ inside the ball $p^{-L}\mathbb{Z}_{p}^{d}$,
see Section \ref{Section_Spin_glass}.

The Coulomb gas model is related with several relevant matters, among them,
random matrices and the obstacle problem, see e.g. \cite[Chapter
2]{Serfaty-Book}. The theory of $p$-adic random matrices is not fully
developed, but it is connected with relevant number-theoretic matters, see
e.g. \cite{Ellenberg et al}, see also \cite{Evans}. We expect that the
$p$-adic Coulomb gas model will be useful in the study of $p$-adic random
matrices. On the other hand, discrete versions of the obstacle problem play a
central role in the study of sandpile models, see e.g. \cite{Levine}. Sandpile
models have been studied on infinite trees, see e.g. \cite{Maes et al}, which
are ultrametric spaces. We expect that $p$-adic versions of the obstacle
problem will play a central role in the construction of $p$-adic counterparts
of sandpile models. Finally, all the results presented in this work are still
valid if we replace $\mathbb{Q}_{p}$ by $\mathbb{F}_{p}((t))$, the field of
formal Laurent power series with coefficients in the finite field
$\mathbb{F}_{p}$ with $p$ elements. In the recent preprint \cite{Sinclair},
Sinclair and Vaaler study the partition function for a $p$-adic Coulomb gas
confined into the unit ball in the case $g_{d}\left(  x\right)  =\ln\left\Vert
x\right\Vert _{p}$. This partition function is a local zeta function attached
to the Vandermonde determinant.

\section{\label{Section_2}Basic aspects of the $p$\textbf{-}adic analysis}

In this section we collect some basic results about $p$-adic analysis that
will be used in the article. For an in-depth review of the $p$-adic analysis
the reader may consult \cite{A-K-S}, \cite{Taibleson}, \cite{V-V-Z}.

\subsection{The field of $p$-adic numbers}

Along this article $p$ will denote a prime number. The field of $p-$adic
numbers $\mathbb{Q}_{p}$ is defined as the completion of the field of rational
numbers $\mathbb{Q}$ with respect to the $p-$adic norm $|\cdot|_{p}$, which is
defined as
\[
\left\vert x\right\vert _{p}=
\begin{cases}
0, & \text{if } x=0\\
p^{-\gamma}, & \text{if } x=p^{\gamma}\frac{a}{b},
\end{cases}
\]
where $a$ and $b$ are integers coprime with $p$. The integer $\gamma=:
\operatorname{ord}(x) $, with $\operatorname{ord}(0):=+\infty$, is called
the\textit{\ }$p-$\textit{adic order of} $x$.

Any $p-$adic number $x\neq0$ has a unique expansion of the form
\[
x=p^{\operatorname{ord}(x)}\sum_{j=0}^{\infty}x_{j}p^{j},
\]
where $x_{j}\in\{0,\dots,p-1\}$ and $x_{0}\neq0$. By using this expansion, we
define \textit{the fractional part of }$x\in\mathbb{Q}_{p}$, denoted
$\{x\}_{p}$, as the rational number
\[
\left\{  x\right\}  _{p}=
\begin{cases}
0, & \text{if } x=0\text{ or }\operatorname{ord}(x)\geq0\\
p^{\operatorname{ord}(x)}\sum_{j=0}^{-\operatorname{ord}(x)-1}x_{j}p^{j}, &
\text{if } \operatorname{ord}(x)<0.
\end{cases}
\]
In addition, any non-zero $p-$adic number can be represented uniquely as
$x=p^{\operatorname{ord}(x)}\operatorname{ac}\left(  x\right)  $ where
$\operatorname{ac}\left(  x\right)  =\sum_{j=0}^{\infty}x_{j}p^{j}$,
$x_{0}\neq0$, is called the \textit{angular component} of $x$. Notice that
$\left\vert \operatorname{ac}\left(  x\right)  \right\vert _{p}=1$.

We extend the $p-$adic norm to $\mathbb{Q}_{p}^{d}$ by taking
\[
||x||_{p}:=\max_{1\leq i\leq d}|x_{i}|_{p},\qquad\text{for }x=(x_{1}%
,\dots,x_{d})\in\mathbb{Q}_{p}^{d}.
\]
We define $\operatorname{ord}(x)=\min_{1\leq i\leq d}\{\operatorname{ord}%
(x_{i})\}$, then $||x||_{p}=p^{-\operatorname{ord}(x)}$. The metric space
$\left(  \mathbb{Q}_{p}^{d},||\cdot||_{p}\right)  $ is a separable complete
ultrametric space. For $r\in\mathbb{Z}$, denote by $B_{r}^{d}(a)=\{x\in
\mathbb{Q}_{p}^{d};||x-a||_{p}\leq p^{r}\}$ \textit{the ball of radius }%
$p^{r}$ \textit{with center at} $a=(a_{1},\dots,a_{d})\in\mathbb{Q}_{p}^{d}$,
and take $B_{r}^{d}:=B_{r}^{d}(0)$. Note that $B_{r}^{d}(a)=B_{r}(a_{1}%
)\times\cdots\times B_{r}(a_{d})$, where $B_{r}(a_{i}):=\{x\in\mathbb{Q}%
_{p};|x_{i}-a_{i}|_{p}\leq p^{r}\}$ is the one-dimensional ball of radius
$p^{r}$ with center at $a_{i}\in\mathbb{Q}_{p}$. The ball $B_{0}^{d}$ equals
to the product of $d$ copies of $B_{0}=\mathbb{Z}_{p}$, \textit{the ring of
}$p-$\textit{adic integers of }$\mathbb{Q}_{p}$. We also denote by $S_{r}%
^{d}(a)=\{x\in\mathbb{Q}_{p}^{d};||x-a||_{p}=p^{r}\}$ \textit{the sphere of
radius }$p^{r}$ \textit{with center at} $a=(a_{1},\dots,a_{d})\in
\mathbb{Q}_{p}^{d}$, and take $S_{r}^{d}:=S_{r}^{d}(0)$. We notice that
$S_{0}^{1}=\mathbb{Z}_{p}^{\times}$ (the group of units of $\mathbb{Z}_{p}$),
but $\left(  \mathbb{Z}_{p}^{\times}\right)  ^{d}\subsetneq S_{0}^{d}$. The
balls and spheres are both open and closed subsets in $%
\mathbb{Q}
_{p}^{d}$. In addition, two balls in $%
\mathbb{Q}
_{p}^{d}$ are either disjoint or one is contained in the other.

As a topological space $\left(
\mathbb{Q}
_{p}^{d},||\cdot||_{p}\right)  $ is totally disconnected, i.e. the only
connected \ subsets of $%
\mathbb{Q}
_{p}^{d}$ are the empty set and the points. A subset of $%
\mathbb{Q}
_{p}^{d}$ is compact if and only if it is closed and bounded in $%
\mathbb{Q}
_{p}^{d}$, see e.g. \cite[Section 1.3]{V-V-Z}, or \cite[Section 1.8]{A-K-S}.
The balls and spheres are compact subsets. Thus $\left(
\mathbb{Q}
_{p}^{d},||\cdot||_{p}\right)  $ is a locally compact topological space.

We will use $\Omega\left(  p^{-r}||x-a||_{p}\right)  $ to denote the
characteristic function of the ball $B_{r}^{d}(a)$. We will use the notation
$1_{A}$ for the characteristic function of a set $A$. Along the article $dx$
will denote a Haar measure on $\left(
\mathbb{Q}
_{p}^{d},+\right)  $ normalized so that $\int_{%
\mathbb{Z}
_{p}^{d}}dx=1.$

\subsection{Some function spaces}

A complex-valued function $\varphi$ defined on $\mathbb{Q} _{p}^{d}$ is called
\textit{locally constant} if for any $x\in\mathbb{Q} _{p}^{d}$ there exist an
integer $l(x)\in\mathbb{Z}$ such that
\begin{equation}
\varphi(x+x^{\prime})=\varphi(x)\qquad\text{for }x^{\prime}\in B_{l(x)}^{d}.
\label{Eq_L_con}%
\end{equation}
A function $\varphi:\mathbb{Q}_{p}^{d}\rightarrow\mathbb{C}$ is called a
\textit{Bruhat-Schwartz function,} or a \textit{test function,} if it is
locally constant with compact support. In this case, there exists
$l\in\mathbb{Z}$, independent of $x$, such that (\ref{Eq_L_con}) holds. The
largest of such numbers $l=l\left(  \varphi\right)  $ is called the
\textit{index of local constancy} of $\varphi$. The $\mathbb{C}$-vector space
of Bruhat-Schwartz functions is denoted by $\mathcal{D}:=\mathcal{D}%
(\mathbb{Q}_{p}^{d})$. We will denote by $\mathcal{D}_{\mathbb{R}%
}:=\mathcal{D} _{\mathbb{R}}(\mathbb{Q} _{p}^{d})$, the $\mathbb{R}$-vector
space of test functions. The convergence in $\mathcal{D}$ is defined in the
following way: $\varphi_{k}\rightarrow0$, $k\rightarrow\infty$, in
$\mathcal{D}$ if and only if

\begin{itemize}
\item[(i)] all the $\varphi_{k}$s are supported in a ball $B_{N}^{d}$ and have
indices of local constancy $l(\varphi_{k})\geq l$, with $N$ and $l$
independent of $k$;

\item[(ii)] $\varphi_{k}\rightarrow0$ uniformly in $\mathbb{Q}_{p}^{d}$.
\end{itemize}

Let $\mathcal{D}^{\prime}:=\mathcal{D}^{\prime}(\mathbb{Q} _{p}^{d})$ denote
the set of all continuous functionals (distributions) on $\mathcal{D}$. We
will denote by $\mathcal{D}_{\mathbb{R}}^{\prime}:=\mathcal{D}_{\mathbb{R}%
}^{\prime}(\mathbb{Q} _{p}^{d})$ the $\mathbb{R}$-vector space of
distributions. The convergence in $\mathcal{D}^{\prime}$ is the weak
convergence: $T_{k}\rightarrow0$, $k\rightarrow\infty$, in $\mathcal{D}%
^{\prime}$ if $\left(  T_{k},\varphi\right)  \rightarrow0$, $k\rightarrow
\infty$, for any $\varphi\in\mathcal{D}$.

Given $\rho\in\lbrack1,\infty)$ and an open subset $U\subset\mathbb{Q}_{p}%
^{d}$, we denote by $L^{\rho}:=L^{\rho}\left(  U\right)  $ the $\mathbb{C}%
$-vector space of all the complex valued functions $g$ defined on
$U$\ satisfying $\left\Vert g\right\Vert _{\rho}=\left\{  \int_{U}\left\vert
g\left(  x\right)  \right\vert ^{\rho}dx\right\}  ^{\frac{1}{\rho}}<\infty$,
and $L^{\infty}\allowbreak:=L^{\infty}\left(  U\right)  $ denotes the
$\mathbb{C}-$vector space of all the complex valued functions $g$ defined in
$U$ such that the essential supremum of $|g|$ is bounded. The corresponding
$\mathbb{R}$-vector spaces are denoted as $L_{\mathbb{R}}^{\rho}%
\allowbreak:=L_{\mathbb{R}}^{\rho}\left(  U\right)  $, $1\leq\rho\leq\infty$.

Let $U$ be an open subset of $\mathbb{Q} _{p}^{d}$, we denote by
$\mathcal{D}(U)$ the $\mathbb{C}$-vector space of all test functions from
$\mathcal{D}(\mathbb{Q} _{p}^{d})$ with supports in $U$. For each $\rho
\in\lbrack1,\infty)$, $\mathcal{D}(U)$ is dense in $L^{\rho}\left(  U\right)
$, see e.g. \cite[Proposition 4.3.3]{A-K-S}.

\subsection{Fourier transform}

Set $\chi_{p}(y):=\exp(2\pi i\{y\}_{p})$ for $y\in\mathbb{Q} _{p}$. The map
$\chi_{p}(\cdot)$ is an additive character on $\mathbb{Q}_{p}$, i.e. a
continuous map from $\left(  \mathbb{Q}_{p},+\right)  $ into $S$ (the unit
circle considered as multiplicative group) satisfying $\chi_{p}(x_{0}%
+x_{1})=\chi_{p}(x_{0})\chi_{p}(x_{1})$, $x_{0},x_{1}\in\mathbb{Q}_{p}$. The
additive characters of $\mathbb{Q} _{p}$ form an Abelian group which is
isomorphic to $\left(  \mathbb{Q}_{p},+\right)  $, the isomorphism is given by
$\xi\rightarrow\chi_{p}(\xi x)$, see e.g. \cite[Section 2.3]{A-K-S}.

Given $x=(x_{1},\dots,x_{d}),$ $\xi=(\xi_{1},\dots,\xi_{d})\in\mathbb{Q}
_{p}^{d}$, we set $x\cdot\xi:=\sum_{j=1}^{d}x_{j}\xi_{j}$. If $f\in L^{1}$,
its Fourier transform is defined by
\[
(\mathcal{F}f)(\xi)=\int_{\mathbb{Q}_{p}^{d}}\chi_{p}(\xi\cdot x)f(x)dx,\qquad
\text{for }\xi\in\mathbb{Q}_{p}^{d}.
\]
We will also use the notation $\mathcal{F}_{x\rightarrow\xi}f$ and
$\widehat{f}$ for the Fourier transform of $f$. The Fourier transform can be
extended as a unitary operator onto $L^{2}$, satisfying
\[
(\mathcal{F}(\mathcal{F}f))(\xi)=f(-\xi) \label{FF(f)}%
\]
for every $f\in L^{2}$, see e.g. \cite[Sections 2.3 and 4.8]{A-K-S} and
\cite[Chapter III, Section 2]{Taibleson}.

The Fourier transform $\mathcal{F}\left[  W\right]  $ of a distribution
$W\in\mathcal{D}^{\prime}\left(  \mathbb{Q}_{p}^{d}\right)  $ is defined by%
\[
\left(  \mathcal{F}\left[  W\right]  ,\varphi\right)  =\left(  W,\mathcal{F}%
\left[  \varphi\right]  \right)  \qquad\text{for all }\varphi\in
\mathcal{D}(\mathbb{Q} _{p}^{N}).
\]
The Fourier transform $W\rightarrow\mathcal{F}\left[  W\right]  $ is a linear
isomorphism from $\mathcal{D}^{\prime}\left(
\mathbb{Q}
_{p}^{N}\right)  $\ onto itself. Furthermore, $W\left(  \xi\right)
=\mathcal{F}\left[  \mathcal{F}\left[  W\right]  \left(  -\xi\right)  \right]
$. We also use the notation $\mathcal{F}_{x\rightarrow\xi}W$ and $\widehat{W}$
for the Fourier transform of $W.$

\section{\label{Section_Taibleson_operator}The Taibleson operator}

We set $\Gamma_{p}^{\left(  d\right)  }(\alpha):=\frac{1-p^{\alpha-d}%
}{1-p^{-\alpha}}$, $\alpha\neq0$. The function
\[
k_{\alpha}(x)=\frac{||x||_{p}^{\alpha-d}}{\Gamma_{p}^{\left(  d\right)
}\left(  \alpha\right)  },\quad\alpha\in\mathbb{R\setminus}\left\{
0,d\right\}  ,\quad x\in\mathbb{Q}_{p}^{d},
\]
is called the \textit{multi-dimensional Riesz kernel}; it determines a
distribution on $\mathcal{D}(\mathbb{Q}_{p}^{d})$ as follows. If
$\alpha\not \in \{0, d\}$, and $\varphi\in\mathcal{D}(\mathbb{Q}_{p}^{d})$,
then
\begin{equation}%
\begin{split}
\left(  k_{\alpha},\varphi\right)   &  =\frac{1-p^{-d}}{1-p^{\alpha-d}}%
\varphi(0)+\frac{1-p^{-\alpha}}{1-p^{\alpha-d}}\int_{||x||_{p}>1}%
||x||_{p}^{\alpha-d}\varphi(x)\,dx\\
&  \quad+\frac{1-p^{-\alpha}}{1-p^{\alpha-d}}\int_{||x||_{p}\leq1}%
||x||_{p}^{\alpha-d}(\varphi(x)-\varphi(0))\,dx.
\end{split}
\label{5}%
\end{equation}
Then $k_{\alpha}\in\mathcal{D}^{\prime}(\mathbb{Q}_{p}^{d})$, for
$\mathbb{R\setminus}\left\{  0,d\right\}  $. In the case $\alpha=0$, by
passing to the limit in (\ref{5}), we obtain
\[
\left(  k_{0},\varphi\right)  :=\lim_{\alpha\rightarrow0}\left(  k_{\alpha
},\varphi\right)  =\varphi(0),
\]
i.e., $k_{0}(x)=\delta\left(  x\right)  $, the Dirac delta distribution, and
therefore $k_{\alpha}\in\mathcal{D}^{\prime}(\mathbb{Q}_{p}^{d})$, for
$\mathbb{R\setminus}\left\{  d\right\}  $.

It follows from (\ref{5}) that for $\alpha>0$,
\begin{equation}
\left(  k_{-\alpha},\varphi\right)  =\frac{1-p^{\alpha}}{1-p^{-\alpha-d}}%
\int_{\mathbb{Q}_{p}^{n}}||x||_{p}^{-\alpha-d}(\varphi(x)-\varphi(0))\,dx.
\label{7}%
\end{equation}

\begin{definition}
The Taibleson pseudodifferential operator $\boldsymbol{D}^{\alpha}$,
$\alpha>0$, is defined as
\[
\boldsymbol{D}^{\alpha}\varphi(x)=\mathcal{F}_{\xi\rightarrow x}^{-1}\left(
||\xi||_{p}^{\alpha}\mathcal{F}_{x\rightarrow\xi}\varphi\right)
,\qquad\text{for }\varphi\in\mathcal{D}. \label{Taibleson_operator}%
\]

\end{definition}

This operator was introduced in \cite{Taibleson}, see also
\cite{Rodriguez-Zuniga} and \cite[Chapter 9]{A-K-S}. The Taibleson operator
coincides with the Vladimirov operator in dimension one.

From the fact that $\left(  \mathcal{F}k_{\alpha}\right)  \left(  x\right)  $,
with $\alpha\neq d$, equals to $||x||_{p}^{-\alpha}$ in $\mathcal{D}^{\prime}%
$, see e.g. \cite[Chap. III, Theorem 4.5]{Taibleson}, and (\ref{7}), we have
\begin{equation}
\boldsymbol{D}^{\alpha}\varphi\left(  x\right)  =\left(  k_{-\alpha}%
\ast\varphi\right)  \left(  x\right)  =\frac{1-p^{\alpha}}{1-p^{-\alpha-d}%
}\int_{\mathbb{Q}_{p}^{d}}||y||_{p}^{-\alpha-d}(\varphi(x-y)-\varphi(x))\,dy.
\label{8}%
\end{equation}
The right-hand side of (\ref{8}) makes sense for a wider class of functions,
for example, for locally constant functions $\varphi$ satisfying
\[
\int_{||x||_{p}\geq1}||x||_{p}^{-\alpha-d}|\varphi(x)|\,dx<\infty.
\]
Consequently, we may assume that the constant functions are contained in the
domain of $\boldsymbol{D}^{\alpha}$, and that $\boldsymbol{D}^{\alpha}%
\varphi=0$, for any constant function $\varphi$. Later on, we will work with
the following extension of $\boldsymbol{D}^{\alpha}$:%
\[%
\begin{array}
[c]{ccc}%
\operatorname{Dom}(\boldsymbol{D}^{\alpha}) & \rightarrow & \mathcal{D}%
^{\prime}\smallskip\\
T & \rightarrow & \boldsymbol{D}^{\alpha}T,
\end{array}
\]
where $\operatorname{Dom}(\boldsymbol{D}^{\alpha}):=\left\{  T\in
\mathcal{D}^{\prime};||x||_{p}^{\alpha}\mathcal{F}\left(  T\right)
\in\mathcal{D}^{\prime}\right\}  $, and $\boldsymbol{D}^{\alpha}%
T=\mathcal{F}^{-1}(||x||_{p}^{\alpha}\mathcal{F}\left(  T\right)  )$. Notice
that for this operator, the formula $\boldsymbol{D}^{\alpha}T=k_{-\alpha}\ast
T$ holds.

\subsection{$p$-adic heat equations}

In this article the Taibleson operator $\boldsymbol{D}^{\alpha}$ will be
considered as a $p$-adic analog of the Laplacian $\Delta:=\frac{\partial^{2}%
}{\partial x_{1}^{2}}+\cdots+\frac{\partial^{2}}{\partial x_{d}^{2}}$\ in
$\mathbb{R}^{d}$. To explain this analogy, we use the `$p$-adic heat
equation,' which is defined as%
\begin{equation}
\frac{\partial u(x,t)}{\partial t}+\boldsymbol{D}^{\alpha}u(x,t)=0,\quad
x\in\mathbb{Q}_{p}^{d},\quad t>0. \label{Eq_1}%
\end{equation}
The analogy with the classical heat equation comes from the fact that the
solution of the initial value problem attached to (\ref{Eq_1}) with initial
datum $u(x,0)=\varphi(x)\in\mathcal{D}_{\mathbb{R}}$ is given by%
\[
u(x,t)=\int_{\mathbb{Q}_{p}^{d}}Z(x-y,t)\varphi(x)\,dx,
\]
where%
\[
Z(x,t):=\int_{\mathbb{Q}_{p}^{d}}\chi_{p}(-x\cdot\xi)e^{-t||\xi||_{p}^{\alpha
}}\,d\xi\qquad\text{for }t>0,
\]
is the $p$\textit{-adic heat kernel}. $Z\left(  x,t\right)  $ is a transition
density of a time and space homogeneous Markov process which is bounded, right
continuous and has no discontinuities other than jumps, cf. \cite[Theorem
16]{Zuniga-LNM-2016}.

The family of `$p$-adic Laplacians' is very large, see e.g. \cite[Chapter
9]{A-K-S}, \cite[Chapter 12]{KKZuniga}, \cite[Chapter 4]{Koch},
\cite{Torresblanca-Zuniga}, \cite[Chapter 2]{Zuniga-LNM-2016} and the
references therein. We pick the Taibleson operator due to the fact that the
corresponding fundamental solutions are well-known.

\subsection{Fundamental solutions}

The $p$-adic analog of the electrostatic equation is%
\begin{equation}
\boldsymbol{D}^{\alpha}u(x)=\varphi\left(  x\right)  ,\qquad\varphi
\in\mathcal{D}. \label{Eq_2}%
\end{equation}

A \textit{fundamental solution} of (\ref{Eq_2}) is a distribution $G_{\alpha}$
such that $u=G_{\alpha}\ast\varphi$ is a solution of (\ref{Eq_2}) in
$\mathcal{D}^{\prime}$.

\begin{proposition}
[{\cite[Theorem 13]{Rodriguez-Zuniga-2010}}]A fundamental solution for
(\ref{Eq_2}) is given by
\[
G_{\alpha}(x)=%
\begin{cases}
\dfrac{1-p^{-\alpha}}{1-p^{\alpha-d}}||x||_{p}^{\alpha-d}, & \text{if }
\alpha\neq d\\
\dfrac{1-p^{d}}{p^{d}\ln p}\ln||x||_{p}, & \text{if }\alpha=d.
\end{cases}
\]

\end{proposition}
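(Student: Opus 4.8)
The plan is to recognize the proposed fundamental solution as a renormalization of the Riesz kernel already at hand. Since $\Gamma_p^{(d)}(\alpha)=\dfrac{1-p^{\alpha-d}}{1-p^{-\alpha}}$, we have $G_\alpha=k_\alpha$ for $\alpha\neq d$, so it suffices to prove that $k_\alpha$ is a fundamental solution of (\ref{Eq_2}): for every $\varphi\in\mathcal D$ the function $u:=k_\alpha\ast\varphi$ must lie in $\operatorname{Dom}(\boldsymbol D^\alpha)$ and satisfy $\boldsymbol D^\alpha u=\varphi$ in $\mathcal D'$. First I would check that $u$ is an admissible argument of $\boldsymbol D^\alpha$: since $k_\alpha$ is locally integrable and $\varphi$ has compact support, $u$ is a well-defined locally constant function (it inherits the index of local constancy of $\varphi$), and for $\alpha<d$ one has $|u(x)|\lesssim\|x\|_p^{\alpha-d}$ as $\|x\|_p\to\infty$, so $\int_{\|x\|_p\ge1}\|x\|_p^{-\alpha-d}|u(x)|\,dx<\infty$ and the integral representation (\ref{8}) of $\boldsymbol D^\alpha$ applies to $u$; for $\alpha>d$ the same kind of estimate (now with the roles of the origin and of infinity exchanged) together with the definition of $\operatorname{Dom}(\boldsymbol D^\alpha)$ does the job.

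The heart of the matter is the single identity
\[
k_{-\alpha}\ast k_\alpha=\delta\qquad\text{in }\mathcal D'(\mathbb Q_p^d).
\]
Granting it, associativity of convolution against the compactly supported $\varphi$ yields $\boldsymbol D^\alpha u=k_{-\alpha}\ast(k_\alpha\ast\varphi)=(k_{-\alpha}\ast k_\alpha)\ast\varphi=\delta\ast\varphi=\varphi$, which is precisely (\ref{Eq_2}). On the Fourier side this identity is the formal statement $\|\xi\|_p^{\alpha}\,\|\xi\|_p^{-\alpha}=1$, using $\widehat{k_\alpha}=\|\xi\|_p^{-\alpha}$ and $\widehat{k_{-\alpha}}=\|\xi\|_p^{\alpha}$. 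To make it rigorous I would either (i) first establish the composition law $k_\alpha\ast k_\beta=k_{\alpha+\beta}$ for parameters in the range where both kernels are genuine convergent integrals, by Fubini and the standard $p$-adic sphere integrals $\int_{S_r^d}dx=p^{rd}(1-p^{-d})$, and then analytically continue in $(\alpha,\beta)$ to the diagonal $\beta=-\alpha$, where $k_0=\delta$; or (ii) compute $(k_{-\alpha}\ast k_\alpha,\varphi)$ directly, splitting the $k_{-\alpha}$-integral at $\|x\|_p=1$ according to the explicit formulas (\ref{5}) and (\ref{7}) and summing the resulting geometric series, which reproduces the normalization constant $\Gamma_p^{(d)}(\alpha)$ and collapses everything to $\varphi(0)$.

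The main obstacle is exactly this identity $k_{-\alpha}\ast k_\alpha=\delta$: the convolution of two distributions neither of which is compactly supported is not a priori defined, and on the Fourier side $\|\xi\|_p^{-\alpha}$ is a singular renormalized distribution at the origin when $\alpha\ge d$ while $\|\xi\|_p^{\alpha}$ vanishes there, so "$\|\xi\|_p^{\alpha}\|\xi\|_p^{-\alpha}=1$" is not a legitimate pointwise product and must be read through the Gelfand--Shilov type regularization already encoded in (\ref{5})--(\ref{7}); tracking the constant $\Gamma_p^{(d)}(\alpha)$ and checking the cancellation of the apparent poles at $\alpha\in\{0,d\}$ is where the work lies, everything else (local constancy of $u$, the growth estimate, associativity against $\varphi\in\mathcal D$) being routine. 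The excluded case $\alpha=d$, with kernel $\dfrac{1-p^{d}}{p^{d}\ln p}\ln\|x\|_p$, is then recovered either by differentiating the family $\alpha\mapsto k_\alpha$ at $\alpha=d$ after subtracting its simple pole---the residue producing the stated constant---or, equivalently, by integrating the $p$-adic heat kernel, $G_\alpha(x)=\int_0^{\infty}Z(x,t)\,dt$ (convergent for $\alpha<d$), since $\int_0^{\infty}e^{-t\|\xi\|_p^{\alpha}}\,dt=\|\xi\|_p^{-\alpha}$, and evaluating this integral explicitly.
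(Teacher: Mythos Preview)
The paper does not give its own proof of this proposition; it is stated as a citation of \cite[Theorem 13]{Rodriguez-Zuniga-2010} and used without argument. In fact, in the paragraph immediately following the proposition the authors run your computation \emph{backwards}: assuming that $G_\alpha$ is a fundamental solution, they write $\boldsymbol D^\alpha(G_\alpha\ast\varphi)=(k_{-\alpha}\ast G_\alpha)\ast\varphi=\varphi$ and conclude $k_{-\alpha}\ast G_\alpha=\delta$ as a corollary. Your proposal reverses this implication---establish $k_{-\alpha}\ast k_\alpha=\delta$ first, then deduce the proposition---which is the natural self-contained route and is correct in outline.

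Two remarks. First, your identification $G_\alpha=k_\alpha$ for $\alpha\neq d$ is exactly right, and the Fourier identity $\widehat{k_\alpha}=\|\xi\|_p^{-\alpha}$ you invoke is already recorded in the paper (the line preceding (\ref{8})). Second, you slightly overstate the difficulty of the ``main obstacle'': the paper ultimately restricts to $0<\alpha<d$, and in that range $\|\xi\|_p^{-\alpha}$ is an honest locally integrable function, so for $\varphi\in\mathcal D$ one has $\widehat{k_\alpha\ast\varphi}(\xi)=\|\xi\|_p^{-\alpha}\widehat\varphi(\xi)$ as a locally integrable function, and the multiplication by the continuous function $\|\xi\|_p^{\alpha}$ is a genuine pointwise product giving $\widehat\varphi$---no Gelfand--Shilov regularization or analytic continuation is needed. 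The subtleties you flag (renormalization at $\xi=0$, pole cancellation) arise only for $\alpha\geq d$, which is outside the regime the paper studies; there your approaches (i) and (ii), and the limiting procedure for $\alpha=d$, are the standard ones and would work.
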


By using that $\boldsymbol{D}^{\alpha}{\LARGE \cdot}=k_{-\alpha}%
\ast{\LARGE \cdot}$, and that $\boldsymbol{D}^{\alpha}\left(  G_{\alpha}%
\ast\varphi\right)  =\varphi$ in $\mathcal{D}^{\prime}$, for any $\varphi
\in\mathcal{D}$, we have%
\[
\boldsymbol{D}^{\alpha}\left(  G_{\alpha}\ast\varphi\right)  =k_{-\alpha}%
\ast\left(  G_{\alpha}\ast\varphi\right)  =\left(  k_{-\alpha}\ast G_{\alpha
}\right)  \ast\varphi=\varphi,\qquad\text{in }\mathcal{D}^{\prime},
\]
for any test function $\varphi$, and consequently $k_{-\alpha}\ast G_{\alpha
}=\delta$, i.e.
\[
\boldsymbol{D}^{\alpha}G_{\alpha}=\delta\qquad\text{in }\mathcal{D}^{\prime}.
\label{Eq_3A}%
\]

To allow an easy comparison with the literature on Coulomb gases, we set:
\begin{equation}
g_{\alpha}\left(  x\right)  =%
\begin{cases}
||x||_{p}^{\alpha-d}, & \text{if }\alpha\neq d\\
\ln||x||_{p}, & \text{if }\alpha=d,
\end{cases}
\label{Eq_3}%
\end{equation}
then
\begin{equation}
\boldsymbol{D}^{\alpha}g_{\alpha}=-C_{d,\alpha}\delta\qquad\text{with
}C_{d,\alpha}=
\begin{cases}
\frac{p^{\alpha-d}-1}{1-p^{-\alpha}}, & \text{if }\alpha\neq d\\
\frac{p^{d}\ln p}{p^{d}-1}, & \text{if }\alpha= d.
\end{cases}
\label{Eq_3B}%
\end{equation}
Notice that in the Archimedean case $\alpha=2$, while in the non-Archimedean
case, we have a family of Green functions depending on the parameter $\alpha$.
In addition, in the $p$-adic case, the potentials $||x||_{p}^{\alpha-d}$,
$\ln||x||_{p}$ occur in all the dimensions.

From now on, we assume that $g_{\alpha}\left(  x\right)  =\frac{1}%
{||x||_{p}^{d-\alpha}}$ with $d>\alpha>0$.

\section{Some technical results}

\begin{lemma}
\label{Lemma_1}For $x$, $y\in\mathbb{Q}_{p}^{d}$, with $x\neq y$, and
$\alpha>0$, with $d>\alpha$, we set%
\begin{equation}
\mathcal{I}(x,y,\alpha):=\int_{\mathbb{Q}_{p}}\,\int_{\mathbb{Q}_{p}%
^{d}}\left\vert t\right\vert _{p}^{2d-\alpha-1}\Omega\left(  \left\Vert
t\left(  z-x\right)  \right\Vert _{p}\right)  \Omega\left(  \left\Vert
t\left(  z-y\right)  \right\Vert _{p}\right)  dz\,dt. \label{Eq_8}%
\end{equation}
Then
\[
\mathcal{I}(x,y,\alpha)=\left(  \frac{1-p^{-1}}{1-p^{\alpha-d}}\right)
\left\Vert x-y\right\Vert _{p}^{\alpha-d}.
\]

\end{lemma}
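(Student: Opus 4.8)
The plan is to evaluate the iterated integral by integrating first in the variable $z$, exploiting the ultrametric geometry of $\mathbb{Q}_{p}^{d}$, and then to reduce the remaining integral in $t$ to a convergent geometric series. Since the integrand is nonnegative, Tonelli's theorem justifies every interchange of the order of integration, and the single point $t=0$ may be discarded because $\left\vert t\right\vert _{p}^{2d-\alpha-1}$ vanishes there (note $2d-\alpha-1>0$, as $d>\alpha>0$ forces $2d-\alpha-1>d-1\geq 0$).

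First I would fix $t\neq 0$ and observe that $\Omega\left(\left\Vert t(z-x)\right\Vert _{p}\right)$ is the characteristic function of the ball $B(x):=\{z\in\mathbb{Q}_{p}^{d}:\left\Vert z-x\right\Vert _{p}\leq\left\vert t\right\vert _{p}^{-1}\}$, and likewise with $y$ in place of $x$. Hence the inner integral $\int_{\mathbb{Q}_{p}^{d}}\Omega(\left\Vert t(z-x)\right\Vert _{p})\Omega(\left\Vert t(z-y)\right\Vert _{p})\,dz$ equals the Haar measure of $B(x)\cap B(y)$. Because two balls of the same radius in an ultrametric space are either disjoint or equal, and $B(x)=B(y)$ holds precisely when $\left\Vert x-y\right\Vert _{p}\leq\left\vert t\right\vert _{p}^{-1}$, this intersection is either $B(x)$, whose Haar measure is $\left\vert t\right\vert _{p}^{-d}$ (a ball of radius $p^{m}$ in $\mathbb{Q}_{p}^{d}$ having measure $p^{md}$, with $m=-\operatorname{ord}(t)$ here), or empty. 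Therefore the inner integral equals $\left\vert t\right\vert _{p}^{-d}$ when $\left\vert t\right\vert _{p}\leq\left\Vert x-y\right\Vert _{p}^{-1}$ and $0$ otherwise.

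Substituting this back, I obtain
\[
\mathcal{I}(x,y,\alpha)=\int_{\left\vert t\right\vert _{p}\leq\left\Vert x-y\right\Vert _{p}^{-1}}\left\vert t\right\vert _{p}^{\,2d-\alpha-1}\left\vert t\right\vert _{p}^{-d}\,dt=\int_{\left\vert t\right\vert _{p}\leq\left\Vert x-y\right\Vert _{p}^{-1}}\left\vert t\right\vert _{p}^{\,d-\alpha-1}\,dt.
\]
Writing $\left\Vert x-y\right\Vert _{p}=p^{-\gamma}$ with $\gamma=\operatorname{ord}(x-y)$, I would split the domain into the spheres $\left\vert t\right\vert _{p}=p^{n}$, $n\leq\gamma$, each of Haar measure $p^{n}(1-p^{-1})$, which yields
\[
\mathcal{I}(x,y,\alpha)=(1-p^{-1})\sum_{n=-\infty}^{\gamma}p^{n(d-\alpha)}=(1-p^{-1})\,\frac{p^{\gamma(d-\alpha)}}{1-p^{\alpha-d}},
\]
the series converging since $d-\alpha>0$. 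Finally $p^{\gamma(d-\alpha)}=\left\Vert x-y\right\Vert _{p}^{\alpha-d}$, which is the asserted identity. There is no real obstacle in this argument; the only points needing a little care are verifying the Tonelli hypotheses (immediate from nonnegativity), discarding the point $t=0$, and — the one genuinely structural step — using the strong triangle inequality to decide precisely when the two balls coincide and when they are disjoint.
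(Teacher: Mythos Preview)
Your proof is correct and rests on the same structural insight as the paper's: the ultrametric geometry forces the two balls in the inner integral to be either equal or disjoint. The paper packages this differently, performing the change of variables $z\mapsto t^{-1}w+y$ to rewrite the inner integral as the convolution $(\Omega\ast\Omega)\bigl(t(x-y)\bigr)$ and then invoking $\Omega\ast\Omega=\Omega$; a second substitution $t\mapsto s\left\Vert x-y\right\Vert _{p}$ reduces the remaining integral to the standard $\int_{\mathbb{Z}_{p}}|s|_{p}^{d-\alpha-1}\,ds$. You instead read the inner integral directly as the Haar measure of an intersection of balls, apply the ultrametric dichotomy, and sum the resulting geometric series explicitly. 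The two arguments are equivalent---the identity $\Omega\ast\Omega=\Omega$ is exactly the analytic form of the ball-intersection fact you use---but your presentation is a bit more elementary and self-contained, while the paper's is slightly more compact by quoting the known value of the $\mathbb{Z}_{p}$ integral.
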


\begin{proof}
The announced formula is proved by a sequence of changes of variables. By
changing variables as $z\rightarrow t^{-1}w+y$, $t\rightarrow t$ (then
$dz\,dt\rightarrow\left\vert t\right\vert _{p}^{-d}dw\,dt$) in (\ref{Eq_8}),
and using that $\Omega\ast\Omega=\Omega$, we get
\begin{align*}
\mathcal{I}(x,y,\alpha)  &  =\int_{\mathbb{Q}_{p}\smallsetminus\left\{
0\right\}  }\,\int_{\mathbb{Q}_{p}^{d}}\left\vert t\right\vert _{p}%
^{d-\alpha-1}\Omega\left(  \left\Vert w\right\Vert _{p}\right)  \Omega\left(
\left\Vert t\left(  x-y\right)  -w\right\Vert _{p}\right)  dw\,dt\\
&  =\int_{\mathbb{Q}_{p}\smallsetminus\left\{  0\right\}  }\,\left\vert
t\right\vert _{p}^{d-\alpha-1}\left(  \Omega\ast\Omega\right)  \left(
\left\Vert t\left(  x-y\right)  \right\Vert _{p}\right)  dt\\
&  =\int_{\mathbb{Q}_{p}\smallsetminus\left\{  0\right\}  }\,\left\vert
t\right\vert _{p}^{d-\alpha-1}\Omega\left(  \left\vert t\left\Vert \left(
x-y\right)  \right\Vert _{p}^{-1}\right\vert _{p}\right)  dt.
\end{align*}
Finally, we change the variables as $t\left\Vert \left(  x-y\right)
\right\Vert _{p}^{-1}\rightarrow s$ (then $dt\rightarrow\left\Vert \left(
x-y\right)  \right\Vert _{p}^{-1}ds$) to obtain
\[
\mathcal{I}(x,y,\alpha)=\left\Vert \left(  x-y\right)  \right\Vert
_{p}^{\alpha-d}\int_{\mathbb{Z}_{p}\smallsetminus\left\{  0\right\}  }\,\left\vert s\right\vert _{p}^{d-\alpha-1}ds=\frac{1-p^{-1}}{1-p^{\alpha-d}
}\left\Vert \left(  x-y\right)  \right\Vert _{p}^{\alpha-d}\quad\text{for
}d>\alpha.\qedhere
\]

\end{proof}

Let $\mu$, $\nu$ be signed Radon measures on $\mathbb{Q}_{p}^{d}$. We set, for
$d>\alpha$,
\[
\mathcal{E}_{\alpha}(\mu,\nu):=\int_{\mathbb{Q}_{p}^{d}}\, \int_{\mathbb{Q}%
_{p}^{d}}\left\Vert x-y\right\Vert _{p}^{\alpha-d}d\mu\left(  x\right)
d\nu\left(  y\right)  .
\]

\begin{proposition}
\label{Prop_1} If $\mathcal{E}_{\alpha}(\left\vert \mu\right\vert ,\left\vert
\mu\right\vert )<+\infty$, then
\begin{equation}
\mathcal{E}_{\alpha}(\mu,\mu)\geq0. \label{Eq_4}%
\end{equation}
The equality in (\ref{Eq_4}) holds if and only if $\mu=0$. Moreover, if
$\mathcal{E}_{\alpha}(\left\vert \nu\right\vert ,\left\vert \nu\right\vert
)<+\infty$, we have the inequality%
\begin{equation}
\left\{  \mathcal{E}_{\alpha}(\mu,\nu)\right\}  ^{2}\leq\mathcal{E}_{\alpha
}(\mu,\mu)\mathcal{E}_{\alpha}(\nu,\nu), \label{Eq_5}%
\end{equation}
with the equality for $\nu\neq0$ if and only if $\mu=c\nu$ for some constant
$c$. The map $\mu\rightarrow\mathcal{E}_{\alpha}(\mu,\mu)$ is strictly convex,
i.e. when $\mu\neq\nu$ and $0<\lambda<1$,%
\begin{equation}
\mathcal{E}_{\alpha}(\lambda\mu+\left(  1-\lambda\right)  \nu,\lambda
\mu+\left(  1-\lambda\right)  \nu)<\lambda\mathcal{E}_{\alpha}(\mu
,\mu)+\left(  1-\lambda\right)  \mathcal{E}_{\alpha}(\nu,\nu). \label{Eq_6}%
\end{equation}

\end{proposition}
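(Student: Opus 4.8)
The plan is to use Lemma \ref{Lemma_1} to rewrite the Coulomb kernel as a ``perfect square'' and thereby realize $\mathcal{E}_{\alpha}$ as a genuine $L^{2}$ inner product; all three assertions then follow from soft Hilbert-space arguments. Since $d>\alpha$, the constant $\kappa:=\frac{1-p^{\alpha-d}}{1-p^{-1}}$ is strictly positive, and Lemma \ref{Lemma_1} rearranges to $\left\Vert x-y\right\Vert_{p}^{\alpha-d}=\kappa\,\mathcal{I}(x,y,\alpha)$ for $x\neq y$. I would substitute this identity into $\mathcal{E}_{\alpha}(\mu,\mu)$. The integrand defining $\mathcal{I}$ is nonnegative, so under the hypothesis $\mathcal{E}_{\alpha}(\left\vert\mu\right\vert,\left\vert\mu\right\vert)<+\infty$, Tonelli's theorem applied to $d\left\vert\mu\right\vert\,d\left\vert\mu\right\vert\,dz\,dt$ shows the resulting quadruple integral is absolutely convergent, and Fubini then legitimates interchanging the $d\mu(x)\,d\mu(y)$ integration with the $dz\,dt$ integration. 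Writing
\[
f_{t}^{\mu}(z):=\int_{\mathbb{Q}_{p}^{d}}\Omega\left(\left\Vert t(z-x)\right\Vert_{p}\right)d\mu(x)=\mu\!\left(B_{\operatorname{ord}(t)}^{d}(z)\right),
\]
the inner double integral collapses to $\left(f_{t}^{\mu}(z)\right)^{2}$, and I obtain the key identity
\[
\mathcal{E}_{\alpha}(\mu,\mu)=\kappa\int_{\mathbb{Q}_{p}}\int_{\mathbb{Q}_{p}^{d}}\left\vert t\right\vert_{p}^{2d-\alpha-1}\left(f_{t}^{\mu}(z)\right)^{2}\,dz\,dt,
\]
together with its bilinear companion $\mathcal{E}_{\alpha}(\mu,\nu)=\kappa\int_{\mathbb{Q}_{p}}\int_{\mathbb{Q}_{p}^{d}}\left\vert t\right\vert_{p}^{2d-\alpha-1}f_{t}^{\mu}(z)f_{t}^{\nu}(z)\,dz\,dt$. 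Since $\left\vert f_{t}^{\mu}\right\vert\leq f_{t}^{\left\vert\mu\right\vert}$, all these integrands lie in $L^{1}\left(\left\vert t\right\vert_{p}^{2d-\alpha-1}\,dz\,dt\right)$ whenever $\mathcal{E}_{\alpha}(\left\vert\mu\right\vert,\left\vert\mu\right\vert)<+\infty$.

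Inequality (\ref{Eq_4}) is then immediate because $\kappa>0$. For the equality case, suppose $\mathcal{E}_{\alpha}(\mu,\mu)=0$; then $\left\Vert f_{t}^{\mu}\right\Vert_{L^{2}(\mathbb{Q}_{p}^{d})}=0$ for Haar-almost every $t$. But $f_{t}^{\mu}$ depends on $t$ only through $\left\vert t\right\vert_{p}$, and each sphere $\{\left\vert t\right\vert_{p}=p^{m}\}$ has positive Haar measure, so in fact $\left\Vert f_{t}^{\mu}\right\Vert_{L^{2}}=0$ for every $t\neq0$. Since $z\mapsto f_{t}^{\mu}(z)$ is locally constant (it is constant on balls of radius $\left\vert t\right\vert_{p}^{-1}$) and Haar measure charges every ball, $f_{t}^{\mu}\equiv0$ everywhere; as $\operatorname{ord}(t)$ runs over all of $\mathbb{Z}$ this says $\mu(B)=0$ for every ball $B\subset\mathbb{Q}_{p}^{d}$. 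Every compact open set is a finite disjoint union of balls, so $\mu^{+}$ and $\mu^{-}$ agree on the ring of compact open subsets, which generates the Borel $\sigma$-algebra; by uniqueness of $\sigma$-finite measures, $\mu^{+}=\mu^{-}$, i.e. $\mu=0$. The converse is trivial.

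For (\ref{Eq_5}) I would apply the ordinary Cauchy--Schwarz inequality in the Hilbert space $L^{2}\left(\mathbb{Q}_{p}\times\mathbb{Q}_{p}^{d},\left\vert t\right\vert_{p}^{2d-\alpha-1}\,dt\,dz\right)$ to $f^{\mu}$ and $f^{\nu}$ through the bilinear representation above; finiteness of the cross term is automatic since $2f_{t}^{\left\vert\mu\right\vert}f_{t}^{\left\vert\nu\right\vert}\leq(f_{t}^{\left\vert\mu\right\vert})^{2}+(f_{t}^{\left\vert\nu\right\vert})^{2}$ gives $\mathcal{E}_{\alpha}(\left\vert\mu\right\vert,\left\vert\nu\right\vert)\leq\tfrac12\left(\mathcal{E}_{\alpha}(\left\vert\mu\right\vert,\left\vert\mu\right\vert)+\mathcal{E}_{\alpha}(\left\vert\nu\right\vert,\left\vert\nu\right\vert)\right)<+\infty$. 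If $\nu\neq0$ then $\mathcal{E}_{\alpha}(\nu,\nu)>0$ by the previous paragraph, so $f^{\nu}\neq0$ in this $L^{2}$ space and equality in Cauchy--Schwarz forces $f^{\mu}=c\,f^{\nu}$ a.e. for some $c\in\mathbb{R}$; since $f^{\mu-c\nu}=f^{\mu}-c f^{\nu}$, this gives $\mathcal{E}_{\alpha}(\mu-c\nu,\mu-c\nu)=0$, and the equality case of (\ref{Eq_4}) applied to $\mu-c\nu$ (which has finite $\left\vert\cdot\right\vert$-energy by the same cross-term bound) yields $\mu=c\nu$; conversely $\mu=c\nu$ gives equality trivially. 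Finally, expanding the left-hand side of (\ref{Eq_6}) by bilinearity produces the identity
\[
\lambda\mathcal{E}_{\alpha}(\mu,\mu)+(1-\lambda)\mathcal{E}_{\alpha}(\nu,\nu)-\mathcal{E}_{\alpha}\!\left(\lambda\mu+(1-\lambda)\nu,\lambda\mu+(1-\lambda)\nu\right)=\lambda(1-\lambda)\,\mathcal{E}_{\alpha}(\mu-\nu,\mu-\nu),
\]
whose right-hand side is strictly positive for $0<\lambda<1$ and $\mu\neq\nu$ by the equality case of (\ref{Eq_4}); this is precisely (\ref{Eq_6}).

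The only genuinely delicate points are the measure-theoretic bookkeeping in the first step---verifying that the Tonelli/Fubini interchange is licit under the hypothesis $\mathcal{E}_{\alpha}(\left\vert\mu\right\vert,\left\vert\mu\right\vert)<+\infty$---and the passage ``$f_{t}^{\mu}\equiv0$ for every ball $\Longrightarrow\mu=0$'' for a \emph{signed} Radon measure. Everything beyond the integral representation is formal manipulation of a positive semidefinite symmetric bilinear form.
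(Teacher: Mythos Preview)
Your argument is correct and follows the same overall strategy as the paper: both use Lemma~\ref{Lemma_1} together with Fubini/Tonelli to rewrite $\mathcal{E}_{\alpha}(\mu,\mu)$ as $\kappa\int\int |t|_p^{2d-\alpha-1}\bigl(f_t^{\mu}(z)\bigr)^2\,dz\,dt$, from which positivity is immediate, and both derive (\ref{Eq_6}) from the identity $\mathrm{RHS}-\mathrm{LHS}=\lambda(1-\lambda)\mathcal{E}_{\alpha}(\mu-\nu,\mu-\nu)$. The only substantive differences are in execution. For the equality case the paper takes the Fourier transform of $f_t^{\mu}=\mu\ast 1_{B_{\operatorname{ord}(t)}^d}$ to conclude $\widehat{\mu}=0$, whereas you argue directly that $\mu(B)=0$ for every ball and then invoke that balls generate the Borel $\sigma$-algebra; your route is slightly more elementary and makes the signed-measure bookkeeping explicit. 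For (\ref{Eq_5}) the paper uses the discriminant trick $\mathcal{E}_{\alpha}(\mu-\lambda\nu,\mu-\lambda\nu)\ge0$ with $\lambda=\mathcal{E}_{\alpha}(\mu,\nu)/\mathcal{E}_{\alpha}(\nu,\nu)$, while you invoke Cauchy--Schwarz in the weighted $L^2$ space; these are of course equivalent, though your formulation makes the equality analysis $f^{\mu}=cf^{\nu}\Rightarrow\mu=c\nu$ cleaner.
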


\begin{proof}
By applying Lemma \ref{Lemma_1} and Fubini's theorem, here we use the
hypothesis $\mathcal{E}_{\alpha}(\left\vert \mu\right\vert ,\left\vert
\mu\right\vert )<\infty$, we have%
\begin{equation}
\label{Eq_9}%
\begin{split}
\mathcal{E}_{\alpha}(\mu,\mu)  &  =\frac{1-p^{\alpha-d}}{1-p^{-1}}%
\int_{\mathbb{Q}_{p}^{d}}\,\int_{\mathbb{Q}_{p}^{d}}\mathcal{I}(x,y,\alpha
)d\mu\left(  x\right)  d\mu\left(  y\right) \\
&  =\int_{\mathbb{Q}_{p}\smallsetminus\left\{  0\right\}  }\,\left\vert
t\right\vert _{p}^{2d-\alpha-1}\left\{  \sqrt{\frac{1-p^{\alpha-d}}{1-p^{-1}}%
}\int_{\mathbb{Q}_{p}^{d}}g_{\left\vert t\right\vert _{p}}\left(  z\right)
dz\right\}  ^{2}dt\geq0,
\end{split}
\end{equation}
where%
\[
g_{\left\vert t\right\vert _{p}}\left(  z\right)  :=\int_{\mathbb{Q}_{p}^{d}%
}\,\Omega\left(  \left\Vert t\left(  z-x\right)  \right\Vert _{p}\right)
d\mu\left(  x\right)  =\mu\left(  z\right)  \ast1_{B_{\operatorname{ord}%
(t)}^{d}}\left(  z\right)  ,\qquad\text{for }t\neq0.
\]
Now, assume that $\mathcal{E}_{\alpha}(\mu,\mu)=0$. Then, it follows from
(\ref{Eq_9}) that $\mu\left(  z\right)  \ast1_{B_{\operatorname{ord}(t)}^{d}%
}\left(  z\right)  \equiv0$ (which is a locally constant function) for almost
all $t\in\mathbb{Q}_{p}$. This last function is locally constant in $z$ (for
almost all $t)$, and its Fourier transform $\widehat{\mu\ast
1_{B_{\operatorname{ord}(t)}^{d}}}\left(  \xi\right)  =p^{d \operatorname{ord}%
(t)}\widehat{\mu}\left(  \xi\right)  \cdot\Omega\left(  \left\Vert
p^{-\operatorname{ord}(t)}\xi\right\Vert \right)  $, consequently,
$\widehat{\mu}\left(  \xi\right)  =0$ for any $\left\Vert \xi\right\Vert
_{p}\leq\left\vert t\right\vert _{p}$. And since $t\in\mathbb{Q}%
_{p}\smallsetminus\left\{  0\right\}  $ is arbitrary, $\widehat{\mu}=0$, and
thus $\mu=0$.

Inequality (\ref{Eq_5}) is proved by considering $\mathcal{E}_{\alpha}%
(\mu-\lambda\nu,\mu-\lambda\nu)\geq0$, with $\lambda=\frac{\mathcal{E}%
_{\alpha}(\mu,\nu)}{\mathcal{E}_{\alpha}(\nu,\nu)}$.

To prove inequality (\ref{Eq_6}), we use that right hand side minus the left
hand side is equal to
\[
\lambda\left(  1-\lambda\right)  \mathcal{E}_{\alpha}(\mu-\nu,\mu
-\nu).\qedhere
\]

\end{proof}

This proposition is the $p$-adic counterpart of Theorem 9.8 in \cite{Lieb et
al}.

\section{The $p$-adic Coulomb gas}

The Hamiltonian of the $p$-adic Coulomb gas is defined as%
\[
H_{n,\alpha}\left(  x_{1},\ldots,x_{n}\right)  :=H_{n}\left(  x_{1}%
,\ldots,x_{n}\right)  =\sum\limits_{i\neq j}g_{\alpha}\left(  x_{i}%
-x_{j}\right)  +n\sum\limits_{i=1}^{n}V(x_{i}),
\]
where $x_{1},\ldots,x_{n}\in\mathbb{Q}_{p}^{d}$ and $V:\mathbb{Q}_{p}%
^{d}\rightarrow\mathbb{R}$. In this article we only consider the case
$g_{\alpha}\left(  x\right)  =\frac{1}{||x||_{p}^{d-\alpha}}$, with $d>\alpha$.

\subsection{$\Gamma$-convergence}

\begin{definition}
We say that a sequence $\left\{  F_{n}\right\}  _{n\in\mathbb{N}}$ of
functions, on a metric space $X$, $\Gamma$-converges to a function
$F:X\rightarrow\left(  -\infty,+\infty\right]  $ if the following two
inequalities hold:

\begin{itemize}
\item[1.] ($\Gamma-\lim\inf$) if $x_{n}\rightarrow x$ in $X$, then
$\liminf_{n\rightarrow+\infty}F_{n}\left(  x_{n}\right)  \geq F(x)$;

\item[2.] ($\Gamma-\limsup$) for all $x$ in $X$, there exists a sequence
$\left\{  x_{n}\right\}  _{n}$\ in $X$ such that $x_{n}\rightarrow x$ and
$\limsup_{n\rightarrow+\infty}F_{n}\left(  x_{n}\right)  \leq F(x)$. Such a
sequence is called a recovery sequence.
\end{itemize}
\end{definition}

\begin{lemma}
[{\cite[Proposition 2.6]{Serfaty-Book}}]\label{Lemma_2}Assume that $F_{n}$
$\Gamma$-converges to $F$. If for every $n $, $x_{n}$ minimizes $F_{n}$, and
if the sequence $\left\{  x_{n}\right\}  _{n\in\mathbb{N}}$ converges to some
$x$ in $X$, then $x$ minimizes $F$, and moreover, $\lim_{n\rightarrow+\infty
}\min_{X}F_{n}=\min_{X}F$.
\end{lemma}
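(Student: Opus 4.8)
The statement is the standard ``fundamental theorem of $\Gamma$-convergence'' specialized to minimizers, and the plan is to extract it directly from the two defining inequalities of $\Gamma$-convergence. First I would fix the hypotheses: $x_n$ minimizes $F_n$ for each $n$, and $x_n \to x$ in $X$. The goal is to show (a) $x$ minimizes $F$, and (b) $\min_X F_n \to \min_X F$.

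\emph{Lower bound.} From the $\Gamma$-$\liminf$ inequality applied to the converging sequence $x_n \to x$, we get
\[
F(x) \le \liminf_{n\to+\infty} F_n(x_n) = \liminf_{n\to+\infty}\Bigl(\min_X F_n\Bigr).
\]
\emph{Upper bound.} Let $y \in X$ be arbitrary. By the $\Gamma$-$\limsup$ inequality there is a recovery sequence $y_n \to y$ with $\limsup_{n\to+\infty} F_n(y_n) \le F(y)$. Since $x_n$ minimizes $F_n$, we have $F_n(x_n) \le F_n(y_n)$ for every $n$, hence
\[
\limsup_{n\to+\infty}\Bigl(\min_X F_n\Bigr) = \limsup_{n\to+\infty} F_n(x_n) \le \limsup_{n\to+\infty} F_n(y_n) \le F(y).
\]
Combining the two displays gives, for every $y \in X$,
\[
F(x) \le \liminf_{n\to+\infty}\bigl(\min_X F_n\bigr) \le \limsup_{n\to+\infty}\bigl(\min_X F_n\bigr) \le F(y).
\]
Taking $y = x$ forces all four quantities to coincide: the $\liminf$ and $\limsup$ agree, so $\lim_{n\to+\infty}\min_X F_n$ exists and equals $F(x)$; and since $F(x) \le F(y)$ for all $y \in X$, the point $x$ is a minimizer of $F$, with $\min_X F = F(x) = \lim_{n\to+\infty}\min_X F_n$.

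There is really no serious obstacle here; the only point requiring a little care is that one must invoke the recovery-sequence (the $\Gamma$-$\limsup$) half of the definition to obtain the upper bound — using only the $\liminf$ inequality is not enough, since a priori $F$ could be much smaller than the limit of the $F_n$. Once both halves are used, the chain of inequalities closes on itself when $y=x$, which simultaneously yields the convergence of the minima and the minimality of the limit point. (Implicitly one also uses that $\min_X F_n$ is attained, which is part of the hypothesis that $x_n$ minimizes $F_n$; no compactness or lower-semicontinuity of the individual $F_n$ needs to be assumed beyond that.)
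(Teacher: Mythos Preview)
Your argument is correct and is exactly the standard proof of the fundamental theorem of $\Gamma$-convergence. The paper itself does not supply a proof of this lemma; it is simply quoted from \cite[Proposition 2.6]{Serfaty-Book}, so there is nothing further to compare.
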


\subsection{$\Gamma$-convergence of the $p$-adic Coulomb gas Hamiltonian}

We denote by $\mathcal{P}(\mathbb{Q}_{p}^{d})$ the space of probability
measures on $\mathbb{Q}_{p}^{d}$. By using the following map:
\[%
\begin{array}
[c]{ccc}%
\left(  \mathbb{Q}_{p}^{d}\right)  ^{n} & \rightarrow & \mathcal{P}%
(\mathbb{Q}_{p}^{d}) \smallskip\\
\left(  x_{1},\ldots,x_{n}\right)  & \rightarrow & \frac{1}{n}\sum
\limits_{i=1}^{n}\delta_{x_{i}}%
\end{array}
\]
which associates to the configuration of $n$ points the probability measure
$\frac{1}{n}\sum\limits_{i=1}^{n}\delta_{x_{i}}$ (called the \textit{empirical
measure}), here $\delta_{x_{i}}$ denotes the Dirac distribution at $x_{i}$, we
consider $H_{n}\left(  x_{1},\ldots,x_{n}\right)  $ as a function on
$\mathcal{P}(\mathbb{Q}_{p}^{d})$, as follows:
\[
H_{n}\left(  \mu\right)  =
\begin{cases}
H_{n}\left(  x_{1},\ldots,x_{n}\right)  , & \text{if }\mu=\frac{1}{n}
\sum\limits_{i=1}^{n}\delta_{x_{i}}\\
+\infty, & \text{otherwise.}%
\end{cases}
\]

\begin{theorem}
\label{Theo_1A}Assume that $d>\alpha$ and that $V$ is a continuous bounded
from below function. The sequence $\left\{  \frac{1}{n^{2}}H_{n}\right\}
_{n}$ of functions (defined on $\mathcal{P}(\mathbb{Q}_{p}^{d})$) $\Gamma
$-converges, with respect to the weak convergence of probability measures, to
the function $I_{\alpha}:\mathcal{P}(\mathbb{Q}_{p}^{d})\rightarrow\left(
-\infty,+\infty\right]  $ defined by%
\[
I(\mu):=I_{\alpha}(\mu)=\int_{\mathbb{Q}_{p}^{d}}\,\int_{\mathbb{Q}_{p}^{d}%
}g_{\alpha}\left(  x-y\right)  d\mu\left(  x\right)  d\mu\left(  y\right)
+\int_{\mathbb{Q}_{p}^{d}}V\left(  x\right)  d\mu\left(  x\right)  .
\]

\end{theorem}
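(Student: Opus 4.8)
The plan is to verify the two defining inequalities of $\Gamma$-convergence separately, following the standard template for mean-field limits of pair-interaction energies (cf. \cite[Proposition 2.6]{Serfaty-Book}), but using the $p$-adic regularization of the kernel supplied by Lemma \ref{Lemma_1} and Proposition \ref{Prop_1}. Throughout, $\mu_n=\frac1n\sum_{i=1}^n\delta_{x_i}$ denotes the empirical measure, and we note the algebraic identity
\[
\frac{1}{n^2}H_n(x_1,\dots,x_n)=\iint_{x\neq y}g_\alpha(x-y)\,d\mu_n(x)\,d\mu_n(y)+\int_{\mathbb{Q}_p^d}V\,d\mu_n,
\]
so that $\frac{1}{n^2}H_n(\mu_n)$ differs from $I_\alpha(\mu_n)$ only by the diagonal term $\frac1n g_\alpha(0)$, which is $+\infty$; this is precisely why one truncates the kernel. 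The key tool is a truncated kernel $g_\alpha^{(M)}(x):=\min\{g_\alpha(x),M\}$, which is bounded and continuous on $\mathbb{Q}_p^d$ (continuity is automatic since $g_\alpha$ is locally constant away from $0$), increases pointwise to $g_\alpha$ as $M\to\infty$, and satisfies $g_\alpha^{(M)}(0)=M<\infty$.

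\textbf{Lower bound ($\Gamma$-$\liminf$).} Suppose $\mu_n\to\mu$ weakly; we may assume $\mu_n=\frac1n\sum_i\delta_{x_i^{(n)}}$ and that $\liminf_n \frac1{n^2}H_n(\mu_n)<\infty$, otherwise there is nothing to prove. Since $g_\alpha\geq g_\alpha^{(M)}$ and the diagonal is discarded in $H_n$, a short computation gives
\[
\frac{1}{n^2}H_n(\mu_n)\;\geq\;\iint g_\alpha^{(M)}(x-y)\,d\mu_n(x)\,d\mu_n(y)-\frac{M}{n}+\int V\,d\mu_n.
\]
Because $g_\alpha^{(M)}$ is bounded and continuous, $\mu_n\otimes\mu_n\to\mu\otimes\mu$ weakly on $\mathbb{Q}_p^d\times\mathbb{Q}_p^d$ forces the double integral to converge to $\iint g_\alpha^{(M)}\,d\mu\,d\mu$; the term $-M/n\to 0$; and since $V$ is continuous and bounded below, $\liminf_n\int V\,d\mu_n\geq\int V\,d\mu$ by the portmanteau theorem (applied to $V$, or to $V$ truncated from above and monotone convergence). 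Hence $\liminf_n\frac1{n^2}H_n(\mu_n)\geq\iint g_\alpha^{(M)}\,d\mu\,d\mu+\int V\,d\mu$ for every $M$, and letting $M\to\infty$ with the monotone convergence theorem yields $\liminf_n\frac1{n^2}H_n(\mu_n)\geq I_\alpha(\mu)$.

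\textbf{Upper bound (recovery sequence).} Fix $\mu\in\mathcal{P}(\mathbb{Q}_p^d)$; if $I_\alpha(\mu)=+\infty$ any sequence converging to $\mu$ works, so assume $I_\alpha(\mu)<\infty$. The plan is the usual sampling argument: one wants points $x_1^{(n)},\dots,x_n^{(n)}$ with $\mu_n\to\mu$ and $\frac1{n^2}H_n(\mu_n)\to I_\alpha(\mu)$. Here the ultrametric structure is a genuine advantage over $\mathbb{R}^d$: for each $n$ choose $r=r(n)\to-\infty$ and partition the (compact, after truncating $\mu$ to a large ball $B_N^d$) support into balls $B_r^d(a_k)$ of equal small radius; since distinct such balls are either disjoint or nested, on $B_r^d(a_j)\times B_r^d(a_k)$ with $j\neq k$ the kernel $g_\alpha(x-y)=\|x-y\|_p^{\alpha-d}$ is \emph{constant}, equal to $\|a_j-a_k\|_p^{\alpha-d}$. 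This makes the discretization error essentially exact. Concretely, distribute $n_k\approx n\,\mu(B_r^d(a_k))$ points into the $k$-th ball (placing them at the same point is harmless since diagonal terms are dropped, or spreading them to control the intra-ball contribution which is $O(1/n)$ times a convergent integral by the finiteness of $\mathcal{E}_\alpha(\mu,\mu)$). Then $\mu_n\to\mu$ weakly by construction, and
\[
\frac{1}{n^2}H_n(\mu_n)=\sum_{j\neq k}\|a_j-a_k\|_p^{\alpha-d}\frac{n_jn_k}{n^2}+\text{(intra-ball)}+\int V\,d\mu_n
\]
converges to $\iint_{x\neq y \text{ in different balls}}g_\alpha\,d\mu\,d\mu+\int V\,d\mu$; refining $r(n)\to-\infty$ sends this to $I_\alpha(\mu)$, using dominated/monotone convergence and $\mathcal{E}_\alpha(\mu,\mu)<\infty$ to handle the part of the energy coming from pairs in the same small ball and the tail outside $B_N^d$. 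Continuity and boundedness below of $V$ give $\int V\,d\mu_n\to\int V\,d\mu$ along this construction.

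\textbf{Main obstacle.} The $\liminf$ half is routine. The real work is the recovery sequence, specifically making the intra-ball contribution and the truncation-to-$B_N^d$ tail vanish in the limit while keeping $\mu_n$ a bona fide empirical measure of exactly $n$ atoms; the assumption $\mathcal{E}_\alpha(\mu,\mu)<\infty$ (implicit when $I_\alpha(\mu)<\infty$ and $V$ is bounded below) together with the local constancy of $g_\alpha$ is what rescues the argument, and one must choose the radius schedule $r(n)$ to decay slowly enough that the number of balls is $o(n)$ yet fast enough that $\iint_{\|x-y\|_p\leq p^{r(n)}}g_\alpha\,d\mu\,d\mu\to 0$. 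I would also isolate the elementary fact that weak convergence $\mu_n\to\mu$ implies $\mu_n\otimes\mu_n\to\mu\otimes\mu$, which together with boundedness of $g_\alpha^{(M)}$ drives the lower bound.
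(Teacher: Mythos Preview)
Your $\Gamma$--$\liminf$ argument via the truncated kernel $g_\alpha^{(M)}=\min\{g_\alpha,M\}$ is correct and is exactly the standard one the paper invokes (it simply cites \cite[pp.~23--24]{Serfaty-Book} for this half).

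For the recovery sequence, however, there is a genuine error and a gap. The remark ``placing them at the same point is harmless since diagonal terms are dropped'' is false: the diagonal removed in $H_n$ is $\{i=j\}$, not $\{x_i=x_j\}$, so if two distinct indices carry the same location the corresponding summand is $g_\alpha(0)=+\infty$ and $H_n=+\infty$. Thus the points must be spread, and then the intra-ball contribution is the whole difficulty; your claim that it is ``$O(1/n)$ times a convergent integral'' is not justified. With $\mu$ of finite energy but possibly singular, a ball $B_r^d(a_k)$ may carry mass $n_k/n$ close to $1$, and there is no a priori reason the discrete intra-ball energy is small or even close to $\iint_{B_k\times B_k}g_\alpha\,d\mu\,d\mu$. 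You correctly flag this as the main obstacle in your last paragraph, but the plan as stated does not resolve it.

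The paper takes a different route precisely to avoid this. It first reduces, via convolution with the approximate identities $\delta_n(x)=p^{nd}\Omega(p^n\|x\|_p)$ and a diagonal argument, to the case where $\mu$ has a density in $\mathcal{D}_{\mathbb R}$ bounded above by $p^{Kd}-1$ and below by $\epsilon>0$, with a fixed index of local constancy. Only then does it lay down a grid: it takes $n=p^{2Md}$ points and places at most one in each sub-ball of radius $p^{-2M-K}$, so distinct points are at distance $\geq p^{-2M-K+1}$. Convergence of the energy is then proved by splitting $g_\alpha=\Omega\, g_\alpha+(1-\Omega)g_\alpha$: the second piece is a test function on the support, so weak convergence applies directly; the first piece lies in $L^1(\mathbb{Z}_p^d)$ and is handled by approximating it in $L^1$ by test functions together with Young's inequality. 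The mollification step is not cosmetic: the uniform upper bound on the density is what guarantees enough sub-balls to separate the points, and the $L^1$ structure (rather than your pointwise ultrametric constancy) is what controls the near-diagonal piece. Your observation that $g_\alpha$ is constant on products of disjoint balls is correct and does make the off-diagonal bookkeeping cleaner than in $\mathbb{R}^d$, but it does not by itself handle the intra-ball term; to close your argument you would need to insert the regularization-to-bounded-density step (or an equivalent device) before sampling.
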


\begin{remark}
From the point of view of statistical mechanics, $I$ is the mean-field limit
energy of $H_{n}$.
\end{remark}

The proof of this result will be given in Section \ref{Section_Proof_Prop_7}.

\subsection{Minimizing the mean-field energy via potential theory}

In this section we consider the following minimization problem:%
\begin{equation}
\min_{\mu\in\mathcal{P}(\mathbb{Q}_{p}^{d})}I(\mu)=\min_{\mu\in\mathcal{P}%
(\mathbb{Q}_{p}^{d})}\left\{  \int_{\mathbb{Q}_{p}^{d}}\,\int_{\mathbb{Q}%
_{p}^{d}}g_{\alpha}\left(  x-y\right)  d\mu\left(  x\right)  d\mu\left(
y\right)  +\int_{\mathbb{Q}_{p}^{d}}V\left(  x\right)  d\mu\left(  x\right)
\right\}  . \label{Eq_11}%
\end{equation}

\begin{lemma}
\label{Lemma_3}The functional $I$ is strictly convex on $\mathcal{P}%
(\mathbb{Q}_{p}^{d})$.
\end{lemma}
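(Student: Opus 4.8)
The plan is to deduce strict convexity of $I$ directly from the strict convexity of the purely Coulombic part $\mathcal{E}_{\alpha}(\mu,\mu)$, which was already established in Proposition~\ref{Prop_1}, together with the fact that the potential term is affine in $\mu$. Concretely, fix $\mu,\nu\in\mathcal{P}(\mathbb{Q}_{p}^{d})$ with $\mu\neq\nu$ and $\lambda\in(0,1)$, and write $\mu_{\lambda}=\lambda\mu+(1-\lambda)\nu$. Since both $\mathcal{E}_{\alpha}(\cdot,\cdot)$ and $\int V\,d(\cdot)$ are computed against a product measure and a single measure respectively, we have the exact identity
\[
I(\mu_{\lambda})=\lambda I(\mu)+(1-\lambda)I(\nu)-\lambda(1-\lambda)\,\mathcal{E}_{\alpha}(\mu-\nu,\mu-\nu),
\]
which is the same algebraic identity used to prove \eqref{Eq_6}, with the linear $V$-term cancelling cleanly. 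Thus the claim reduces to showing $\mathcal{E}_{\alpha}(\mu-\nu,\mu-\nu)>0$.

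The signed measure $\sigma:=\mu-\nu$ is nonzero (as $\mu\neq\nu$) and has total mass zero. To invoke the strict positivity clause of Proposition~\ref{Prop_1} we must check its hypothesis, namely $\mathcal{E}_{\alpha}(|\sigma|,|\sigma|)<+\infty$; here we should restrict attention to the set on which $I$ is finite, i.e. to measures $\mu,\nu$ with $\mathcal{E}_{\alpha}(\mu,\mu)<+\infty$ and $\mathcal{E}_{\alpha}(\nu,\nu)<+\infty$ (otherwise $I$ takes the value $+\infty$ and convexity is vacuous at that point, since a strictly convex functional valued in $(-\infty,+\infty]$ only needs the strict inequality when the right-hand side is finite). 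Given such $\mu,\nu$, one has $|\sigma|\leq\mu+\nu$, so by bilinearity and monotonicity $\mathcal{E}_{\alpha}(|\sigma|,|\sigma|)\leq\mathcal{E}_{\alpha}(\mu+\nu,\mu+\nu)=\mathcal{E}_{\alpha}(\mu,\mu)+2\mathcal{E}_{\alpha}(\mu,\nu)+\mathcal{E}_{\alpha}(\nu,\nu)$, and the cross term is finite by the Cauchy--Schwarz inequality \eqref{Eq_5}. Hence the hypothesis of Proposition~\ref{Prop_1} is met, and its equality case gives $\mathcal{E}_{\alpha}(\sigma,\sigma)>0$ since $\sigma\neq0$. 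Plugging this into the displayed identity yields $I(\mu_{\lambda})<\lambda I(\mu)+(1-\lambda)I(\nu)$, which is precisely strict convexity.

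The only real subtlety — and the one place where care is needed rather than routine computation — is the bookkeeping around infinite energies: one must formulate strict convexity correctly for a functional with values in $(-\infty,+\infty]$ (the inequality is only asserted when the right-hand side is finite), and one must ensure that on the effective domain the hypothesis $\mathcal{E}_{\alpha}(|\mu-\nu|,|\mu-\nu|)<\infty$ of Proposition~\ref{Prop_1} really does hold, which is handled by the $|\mu-\nu|\leq\mu+\nu$ bound plus \eqref{Eq_5}. Everything else is the algebraic identity already exploited in the proof of Proposition~\ref{Prop_1}, now carrying along the harmless affine term $\int V\,d\mu$.
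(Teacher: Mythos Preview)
Your proof is correct and follows essentially the same route as the paper: observe that the $V$-term is linear, then reduce strict convexity of $I$ to the strict convexity of $\mathcal{E}_{\alpha}$ established in Proposition~\ref{Prop_1}. You simply unpack the argument more explicitly—writing out the quadratic identity and carefully verifying the finite-energy hypothesis $\mathcal{E}_{\alpha}(|\mu-\nu|,|\mu-\nu|)<\infty$ via $|\mu-\nu|\le\mu+\nu$ and \eqref{Eq_5}—whereas the paper's proof just restricts to finite-energy measures and cites Proposition~\ref{Prop_1} directly.
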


\begin{proof}
Since $\mu\rightarrow\int Vd\mu$ is linear, it is sufficient to show that the
functional (the \textit{mutual energy} of the measures $\mu$, $\nu$)
\begin{equation}
\mathcal{E}_{\alpha}(\mu,\nu):=\int_{\mathbb{Q}_{p}^{d}}\, \int_{\mathbb{Q}%
_{p}^{d}}g_{\alpha}\left(  x-y\right)  d\mu\left(  x\right)  d\mu\left(
y\right)  \label{Eq_Energy}%
\end{equation}
satisfies%
\[
\mathcal{E}_{\alpha}(\lambda\mu+\left(  1-\lambda\right)  \nu,\lambda
\mu+\left(  1-\lambda\right)  \nu)<\lambda\mathcal{E}_{\alpha}(\mu
,\mu)+\left(  1-\lambda\right)  \mathcal{E}_{\alpha}(\nu,\nu), \label{Eq_12}%
\]
for $0<\lambda<1$ and $\mu$, $\nu$ belonging to the convex cone of probability
measures, with $\mathcal{E}_{\alpha}(\mu,\mu)$, $\mathcal{E}_{\alpha}(\nu
,\nu)<+\infty$. This fact follows from Proposition \ref{Prop_1}.
\end{proof}

As a consequence, if there exists a minimizer to (\ref{Eq_11}), it is unique.
This minimizer is called the \textit{equilibrium measure} or the
\textit{Frostman equilibrium measure} in potential theory. In order to show
the existence of an equilibrium measure we make the following assumptions on
the potential $V$:
\begin{gather}
V\text{ is lower semi-continuous and bounded from below function;}\tag{A1}\\
\lim_{\left\Vert x\right\Vert _{p}\rightarrow+\infty}\left(  V(x)+g_{\alpha
}(x)\right)  =+\infty. \tag{A2}%
\end{gather}

The first condition assures the lower semi-continuity of $I$ and that $\inf
I>-\infty$. The second condition is equivalent to the condition $\lim
_{\left\Vert x\right\Vert _{p}\rightarrow+\infty}V(x)=+\infty$.

\begin{lemma}
[{\cite[Lemma 2.10]{Serfaty-Book}}]\label{Lemma_4}Assume that (A1) and (A2)
are satisfied. Let $\left\{  \mu_{n}\right\}  _{n}$ be a sequence in
$\mathcal{P}(\mathbb{Q}_{p}^{d})$ such that $\left\{  I\left(  \mu_{n}\right)
\right\}  _{n}$ is bounded. Then, up to\ extraction of a subsequence,
$\left\{  \mu_{n}\right\}  _{n}$ converges to some $\mu$ in $\mathcal{P}%
(\mathbb{Q}_{p}^{d})$ in the weak sense of probabilities and
\[
\liminf_{n\rightarrow\infty} I\left(  \mu_{n}\right)  \geq I(\mu).
\]

\end{lemma}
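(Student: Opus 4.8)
The plan is to follow the classical argument for tightness plus lower semicontinuity of the energy, adapting it to the ultrametric setting. First I would establish tightness of the sequence $\{\mu_n\}_n$. Since $\{I(\mu_n)\}_n$ is bounded, say by $M$, and since $g_\alpha \geq 0$ (because $d > \alpha$ means $\|x\|_p^{\alpha - d}$ is bounded below by... actually more carefully, $g_\alpha(x) = \|x\|_p^{\alpha-d} > 0$ everywhere), the double-integral term $\mathcal{E}_\alpha(\mu_n,\mu_n)$ is nonnegative; hence $\int V\, d\mu_n \leq M$. Combined with (A2), which as noted is equivalent to $V(x) \to +\infty$ as $\|x\|_p \to \infty$, a standard Chebyshev-type argument gives: for every $\varepsilon > 0$ there is $R$ with $\mu_n(\mathbb{Q}_p^d \setminus B_R^d) < \varepsilon$ for all $n$. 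Since balls $B_R^d$ are compact in $\mathbb{Q}_p^d$, this is exactly tightness, so by Prokhorov's theorem a subsequence (not relabeled) converges weakly to some $\mu \in \mathcal{P}(\mathbb{Q}_p^d)$.

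Next I would prove the lower semicontinuity inequality $\liminf_n I(\mu_n) \geq I(\mu)$. Write $I(\mu) = \mathcal{E}_\alpha(\mu,\mu) + \int V\, d\mu$ and handle the two pieces. For the potential term, $V$ is lower semicontinuous and bounded below by (A1), so by the portmanteau theorem $\liminf_n \int V\, d\mu_n \geq \int V\, d\mu$. For the energy term, the kernel $g_\alpha(x-y) = \|x-y\|_p^{\alpha-d}$ is lower semicontinuous on $\mathbb{Q}_p^d \times \mathbb{Q}_p^d$ (indeed it is continuous off the diagonal and blows up to $+\infty$ along the diagonal, and it is bounded below by $0$), so it can be written as an increasing limit of bounded continuous functions $g_\alpha^{(k)} := \min(g_\alpha, k)$; for each fixed $k$, weak convergence of $\mu_n \otimes \mu_n \to \mu \otimes \mu$ on the product space (which follows from weak convergence of $\mu_n$) gives $\int\!\!\int g_\alpha^{(k)}\, d\mu\, d\mu = \lim_n \int\!\!\int g_\alpha^{(k)}\, d\mu_n\, d\mu_n \leq \liminf_n \mathcal{E}_\alpha(\mu_n,\mu_n)$, and letting $k \to \infty$ with monotone convergence yields $\mathcal{E}_\alpha(\mu,\mu) \leq \liminf_n \mathcal{E}_\alpha(\mu_n,\mu_n)$. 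Adding the two liminf inequalities (using superadditivity of $\liminf$) gives the claim.

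The main obstacle, though a mild one in this setting, is justifying the tightness step cleanly: one must be careful that (A2) genuinely forces $V \to +\infty$ at infinity (the term $g_\alpha(x) = \|x\|_p^{\alpha-d} \to 0$ as $\|x\|_p \to \infty$ since $\alpha < d$, so (A2) is equivalent to $V(x) \to +\infty$, as the text already remarks), and that lower semicontinuity plus boundedness below of $V$ lets us split $\int V\, d\mu_n$ without integrability pathologies. The ultrametric structure actually helps here: balls are compact and open simultaneously, so the tail estimate $\mu_n(\mathbb{Q}_p^d \setminus B_R^d) < \varepsilon$ directly delivers a compact set carrying mass $> 1 - \varepsilon$, with no need for the approximation by closed sets that one uses in $\mathbb{R}^d$. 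Everything else is the verbatim transcription of the real-variable proof in \cite[Lemma 2.10]{Serfaty-Book}.
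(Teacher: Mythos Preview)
Your proposal is correct and matches the approach the paper intends: the paper does not supply its own proof of this lemma but simply cites \cite[Lemma~2.10]{Serfaty-Book}, and your argument is precisely the standard tightness-plus-lower-semicontinuity proof from that reference, transported verbatim to $\mathbb{Q}_p^d$.

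One small remark on the comparison. Serfaty's original argument handles kernels that may be unbounded below (the logarithmic case), so she bundles the kernel and potential together as $g(x-y)+\tfrac{V(x)}{2}+\tfrac{V(y)}{2}$ and uses (A2) to get a joint lower bound before extracting tightness and lower semicontinuity in one stroke. You instead exploit the fact that here $g_\alpha=\|x\|_p^{\alpha-d}>0$, so $\mathcal{E}_\alpha(\mu_n,\mu_n)\ge 0$ and you can bound $\int V\,d\mu_n$ directly; this is a legitimate simplification in the regime $d>\alpha$ and makes the tightness step slightly cleaner than in the general Serfaty proof. The lower-semicontinuity step via truncations $\min(g_\alpha,k)$ and the portmanteau theorem is exactly the standard one.
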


\begin{definition}
We define the capacity of a compact set $K\subset\mathbb{Q}_{p}^{d}$ by
\[
\operatorname{Cap}_{\alpha}(K)=\frac{1}{\inf_{\mu\in\mathcal{P}(K)}%
\mathcal{E}_{\alpha}(\mu,\mu)},\qquad\text{with }d>\alpha>0,
\]
where $\mathcal{P}(K)$ denotes the set of probability measures supported in
$K$, and $\mathcal{E}_{\alpha}(\mu,\mu)$ denotes the Coulomb energy defined as
in (\ref{Eq_Energy}). The capacity of $K$ is $+\infty$ if there no exists a
probability measure $\mu\in\mathcal{P}(K)$ such that $\mathcal{E}_{\alpha}%
(\mu,\mu)<+\infty$. For a general set $E\subset\mathbb{Q}_{p}^{d}$, we set%
\[
\operatorname{Cap}_{\alpha}(E)=\sup_{K\subset E}\operatorname{Cap}_{\alpha
}(K),
\]
where $K$ runs through all the compact subsets of $E$.
\end{definition}

Alternatively, we can define the capacity of an arbitrary set $A\subset
\mathbb{Q}_{p}^{d}$ as $\operatorname{Cap}_{\alpha}(A)=\frac{1}{\inf
\mathcal{E}_{\alpha}(\mu,\mu)}$, where $\mu$ runs through all the positive
measures concentrated on $A$ with total mass $\mu\left(  \mathbb{Q}_{p}%
^{d}\right)  =\mu\left(  A\right)  =1$. The result would be the same if the
support of $\mu$ is required to be compact and contained in $A$, see e.g.
\cite[Lemma 2.2.2]{Fuglede}.

The capacity is an increasing function. In addition, it satisfies the following:

\begin{lemma}
[{\cite[Lemma 2.3.1]{Fuglede}}]Let $N$ be a subset of $\mathbb{Q}_{p}^{d}$.
The following conditions are equivalent:

\begin{itemize}
\item[(i)] $\operatorname{Cap}_{\alpha}(N)=0$;

\item[(ii)] $\mu=0$ is the only positive measure of finite energy (i.e.
$\mathcal{E}_{\alpha}(\mu,\mu)<+\infty$) concentrated on $N$;

\item[(iii)] $\mu=0$ is the only positive measure of finite energy supported
by some compact subset of $N$.
\end{itemize}
\end{lemma}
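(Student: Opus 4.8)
The plan is to prove the cycle of implications $(i)\Rightarrow(ii)\Rightarrow(iii)\Rightarrow(i)$, relying on the positive-definiteness of the energy form established in Proposition \ref{Prop_1} and on the inner-regularity built into the definition of $\operatorname{Cap}_\alpha$. Throughout, ``finite energy'' means $\mathcal{E}_\alpha(\mu,\mu)<+\infty$; note that by Proposition \ref{Prop_1} the energy of any signed measure with $\mathcal{E}_\alpha(|\mu|,|\mu|)<+\infty$ is nonnegative, and $\mathcal{E}_\alpha$ satisfies the Cauchy--Schwarz inequality \eqref{Eq_5}.

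First I would prove $(i)\Rightarrow(ii)$. Suppose $\operatorname{Cap}_\alpha(N)=0$ and let $\mu\neq0$ be a positive measure of finite energy concentrated on $N$. By the alternative description of capacity recorded after the definition (equivalently by \cite[Lemma 2.2.2]{Fuglede}), there is a nonzero positive measure $\mu'$ of finite energy whose support is a compact subset $K\subset N$; indeed one obtains such a $\mu'$ by restricting $\mu$ to a suitable compact set of positive $\mu$-mass (using inner regularity of the Radon measure $\mu$) and noting that $\mathcal{E}_\alpha(\mu',\mu')\le\mathcal{E}_\alpha(\mu,\mu)<+\infty$ because the kernel is nonnegative. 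Normalizing, $\tilde\mu:=\mu'/\mu'(K)\in\mathcal{P}(K)$ has finite energy, so $\operatorname{Cap}_\alpha(K)=1/\inf_{\nu\in\mathcal{P}(K)}\mathcal{E}_\alpha(\nu,\nu)>0$, whence $\operatorname{Cap}_\alpha(N)\ge\operatorname{Cap}_\alpha(K)>0$, a contradiction. Thus $(ii)$ holds. The implication $(iii)\Rightarrow(i)$ is essentially the same argument run in reverse: if $\operatorname{Cap}_\alpha(N)>0$ then $\operatorname{Cap}_\alpha(K)>0$ for some compact $K\subset N$, so $\inf_{\nu\in\mathcal{P}(K)}\mathcal{E}_\alpha(\nu,\nu)<+\infty$, and any near-minimizer $\nu\in\mathcal{P}(K)$ of finite energy is a nonzero positive measure of finite energy supported by a compact subset of $N$, contradicting $(iii)$.

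It remains to handle $(ii)\Rightarrow(iii)$. This direction is the one requiring a genuine argument rather than unwinding definitions, so I expect it to be the main obstacle. Condition $(iii)$ is superficially weaker than $(ii)$ (supported by a compact subset versus merely concentrated on $N$), so the content is to show that a measure ``concentrated on $N$'' in the measure-theoretic sense can, when it has finite energy, be replaced by one genuinely supported on a compact subset of $N$ without destroying nonnegativity or finiteness of energy. The key point is that a nonzero positive Radon measure $\mu$ on the locally compact space $\mathbb{Q}_p^d$ concentrated on $N$ restricts to a nonzero measure $\mu|_K$ for some compact $K$ — here one must be slightly careful about the meaning of ``concentrated on $N$'' when $N$ is non-measurable, but following Fuglede one takes it to mean that $\mathbb{Q}_p^d\setminus N$ is $\mu$-negligible in the sense of outer measure, and then inner regularity of $\mu$ supplies a compact $K$ with $\mu(K)>0$ and (after intersecting with a set on which $N$ has full trace) $K\setminus N$ negligible. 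Since $0\le g_\alpha$, restriction cannot increase energy, so $\mu|_K$ has finite energy; and its closed support is a compact subset of the closure of $N$, which one can arrange to lie in $N$ by first passing to $K\cap\overline{N\cap K}$ or, more cleanly, by simply invoking that for the capacity-zero question only compact subsets of $N$ matter. Hence $(ii)$ fails to be vacuous exactly when $(iii)$ fails, giving $(ii)\Rightarrow(iii)$ (contrapositive: if some nonzero finite-energy $\mu$ is concentrated on $N$, the restriction argument produces one supported by a compact subset of $N$). Assembling the three implications completes the proof.
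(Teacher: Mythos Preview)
The paper does not supply a proof of this lemma; it is quoted from Fuglede with citation only. Your proposal is essentially sound in content, but there is a logical slip in the organization that you should fix.

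The implication $(ii)\Rightarrow(iii)$ is \emph{immediate}: any positive measure supported by a compact subset of $N$ is in particular concentrated on $N$, so if $(ii)$ forces every finite-energy measure concentrated on $N$ to vanish, the same conclusion holds a fortiori for those with compact support inside $N$. You instead declare this the ``main obstacle'' and then argue: ``contrapositive: if some nonzero finite-energy $\mu$ is concentrated on $N$, the restriction argument produces one supported by a compact subset of $N$.'' That statement is the contrapositive of $(iii)\Rightarrow(ii)$, not of $(ii)\Rightarrow(iii)$; you have the direction reversed.

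This does no real damage to the overall argument. Your proofs of $(i)\Rightarrow(ii)$ and $(iii)\Rightarrow(i)$ are correct, and together with the trivial $(ii)\Rightarrow(iii)$ they close the cycle. The restriction-to-a-compact-subset manoeuvre you elaborate in the third paragraph is precisely the substantive step already carried out in your $(i)\Rightarrow(ii)$ argument (passing from a measure merely concentrated on $N$ to one with compact support in $N$, via inner regularity and the nonnegativity of $g_\alpha$), so that paragraph is redundant rather than wrong. Simply replace it with the one-line observation that compact support in $N$ implies concentration on $N$, and the proof is complete.
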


A property is said to hold \textit{quasi-everywhere} (q.e.), if it holds
everywhere except on a set of capacity zero.

In order to allow the potential $V$ to take the value $+\infty$, which is
equivalent to impose the constraint that the probability measures are
supported only on a specific set, the set where $V$ is finite, we impose on
the potential $V$ the following condition:
\begin{equation}
\left\{  x\in\mathbb{Q}_{p}^{d};V(x)<+\infty\right\}  \quad\text{has positive
capacity.} \tag{A3}%
\end{equation}

\begin{lemma}
[{\cite[Lemma 2.13]{Serfaty-Book}}]\label{Lemma_5}Under the assumptions
(A1)--(A3), we have
\[
\inf I<+\infty.
\]

\end{lemma}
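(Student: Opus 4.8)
The plan is to exhibit a single probability measure $\mu$ supported on the set where $V$ is finite and having $I(\mu)<+\infty$; since $I(\mu)\geq\inf I$, this immediately gives the claim. By assumption (A3), the set $\{x\in\mathbb{Q}_p^d:V(x)<+\infty\}$ has positive capacity, so by the definition of capacity there is a compact subset $K$ of this set with $\operatorname{Cap}_\alpha(K)>0$; equivalently, there exists $\mu\in\mathcal{P}(K)$ with $\mathcal{E}_\alpha(\mu,\mu)<+\infty$. This takes care of the interaction term. The remaining point is to control $\int V\,d\mu$, and this is where a little care is needed: $V$ is only lower semicontinuous, so it need not be bounded above on $K$, and a priori $\int_K V\,d\mu$ could be $+\infty$ even though $V<+\infty$ pointwise on $K$.

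To handle this, I would shrink $K$ slightly. First I would use (A1): $V$ is lower semicontinuous and bounded below, say $V\geq -M$. For each positive integer $m$ consider the closed set $K_m=\{x\in K:V(x)\leq m\}$ (closed in $K$ because $V$ is l.s.c.). Since $V<+\infty$ on $K$ we have $K=\bigcup_m K_m$, an increasing union, so by the countable (or rather monotone) behaviour of capacity on increasing sequences of sets — or simply by choosing $m$ large — some $K_m$ still has positive capacity; more carefully, if every $K_m$ had capacity zero then $K$, being a countable increasing union of capacity-zero sets, would have capacity zero, contradicting $\operatorname{Cap}_\alpha(K)>0$. (The subadditivity/continuity from below of $\operatorname{Cap}_\alpha$ on increasing sequences is the standard fact; here one only needs the elementary direction that a countable union of polar sets is polar.) Fix such an $m$ and pick $\nu\in\mathcal{P}(K_m)$ with $\mathcal{E}_\alpha(\nu,\nu)<+\infty$.

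With this $\nu$ we are done: the support of $\nu$ lies in $K_m$, where $-M\leq V\leq m$, so
\[
\int_{\mathbb{Q}_p^d}V(x)\,d\nu(x)\leq m<+\infty,
\]
and hence
\[
I(\nu)=\mathcal{E}_\alpha(\nu,\nu)+\int_{\mathbb{Q}_p^d}V(x)\,d\nu(x)<+\infty,
\]
which gives $\inf I\leq I(\nu)<+\infty$. The main (and only) obstacle is the bookkeeping in the previous paragraph: passing from "positive capacity of $\{V<+\infty\}$" to "positive capacity of a set on which $V$ is bounded", which rests on the fact that a countable union of sets of capacity zero again has capacity zero. That fact follows directly from the characterization of capacity-zero (polar) sets in the Lemma quoted above together with the observation that the union of countably many $\mu=0$-only-finite-energy sets cannot support a nonzero finite-energy measure (standard in classical potential theory, e.g. \cite{Fuglede}); everything else is immediate.
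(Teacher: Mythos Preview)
The paper does not supply its own proof of this lemma; it simply cites \cite[Lemma 2.13]{Serfaty-Book}. Your argument is precisely the standard one given there: use (A3) to find a compact $K\subset\{V<+\infty\}$ of positive capacity, shrink to a sublevel set $K_m=\{x\in K:V(x)\leq m\}$ using countable subadditivity of capacity (equivalently, that a countable union of polar sets is polar, cf.\ \cite{Fuglede}), and then any $\nu\in\mathcal{P}(K_m)$ of finite energy has $I(\nu)<+\infty$. So your proposal is correct and matches the cited approach.
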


\begin{theorem}
\label{Theorem_1}Under the assumptions (A1)--(A3) and $d>\alpha$, the minimum
of $I$ over $\mathcal{P}(\mathbb{Q}_{p}^{d})$ exists, is finite and is
achieved by a unique $\mu_{0}$, which has compact support of positive
capacity. In addition $\mu_{0}$ is uniquely characterized by the fact that%
\begin{equation}%
\begin{cases}
h_{\alpha,\mu_{0}}+\frac{V}{2}\geq C & \text{q.e.\ in }\mathbb{Q}_{p}^{d}\\
h_{\alpha,\mu_{0}}+\frac{V}{2}=C & \text{q.e.\ in the support of }\mu_{0},
\end{cases}
\label{Eq_15}%
\end{equation}
where
\begin{equation}
h_{\alpha,\mu_{0}}(x):=\int_{\mathbb{Q}_{p}^{d}}g_{\alpha}\left(  x-y\right)
d\mu_{0}\left(  y\right)  \label{Eq_16}%
\end{equation}
is the electrostatic potential generated by $\mu_{0}$, and
\begin{equation}
C:=I(\mu_{0})-\frac{1}{2}\int_{\mathbb{Q}_{p}^{d}}V(x)d\mu_{0}\left(
x\right)  . \label{Eq_17}%
\end{equation}

\end{theorem}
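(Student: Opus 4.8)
The plan is to follow the classical Frostman/obstacle-problem argument, adapted to the ultrametric setting, exploiting the tools already assembled. First I would establish \emph{existence}: by Lemma \ref{Lemma_5} we have $\inf I<+\infty$, and by (A1) together with Proposition \ref{Prop_1} (which gives $\mathcal{E}_\alpha(\mu,\mu)\geq 0$) we get $\inf I>-\infty$; taking a minimizing sequence $\{\mu_n\}$ with $I(\mu_n)$ bounded, Lemma \ref{Lemma_4} yields a subsequential weak limit $\mu_0\in\mathcal{P}(\mathbb{Q}_p^d)$ with $I(\mu_0)\leq\liminf I(\mu_n)=\inf I$, hence $\mu_0$ is a minimizer. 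Uniqueness is immediate from the strict convexity of $I$ established in Lemma \ref{Lemma_3}. That $\mu_0$ has positive capacity follows because $\mathcal{E}_\alpha(\mu_0,\mu_0)\leq I(\mu_0)+\sup(-V)^{-}<+\infty$ (using boundedness below of $V$ and that the $V$-integral is controlled on the finite-energy part), so $\mu_0$ is a finite-energy measure and its support cannot be capacity-zero by the equivalence lemma of Fuglede quoted above. Compactness of the support comes from (A2): outside a large ball $B_N^d$ the integrand $g_\alpha(x-y)$ stays bounded below while $V(x)\to+\infty$, so any mass placed far out strictly increases $I$; more precisely, if $\mu_0$ charged the complement of some ball, one could push that mass inward (or truncate and renormalize) and strictly decrease $I_\alpha$, contradicting minimality.

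Next I would derive the variational (Euler--Lagrange) inequalities \eqref{Eq_15}. The standard device is to perturb: for any $\nu\in\mathcal{P}(\mathbb{Q}_p^d)$ of finite energy, consider $\mu_t=(1-t)\mu_0+t\nu$ for $t\in[0,1]$ and compute $\frac{d}{dt}\big|_{t=0^+}I(\mu_t)\geq 0$, which expands (using the bilinearity of $\mathcal{E}_\alpha$ and symmetry of $g_\alpha$) to
\[
2\int_{\mathbb{Q}_p^d}\Big(h_{\alpha,\mu_0}(x)+\tfrac{V(x)}{2}\Big)\,d\nu(x)\;\geq\;2\int_{\mathbb{Q}_p^d}\Big(h_{\alpha,\mu_0}(x)+\tfrac{V(x)}{2}\Big)\,d\mu_0(x)=:2C,
\]
where the right-hand identity \emph{defines} $C$ and, after unwinding $\int h_{\alpha,\mu_0}\,d\mu_0=\mathcal{E}_\alpha(\mu_0,\mu_0)$, reduces to the stated formula \eqref{Eq_17}. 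Specializing $\nu$ to Dirac-type or small-ball-averaged probability measures concentrated near a point $x$ gives $h_{\alpha,\mu_0}+\frac{V}{2}\geq C$ quasi-everywhere (the capacity-zero exceptional set is exactly where the finite-energy test measures cannot probe). Conversely, on the support of $\mu_0$ the reverse inequality must hold q.e.: if $h_{\alpha,\mu_0}+\frac{V}{2}>C+\varepsilon$ on a non-negligible (positive-capacity) subset $A$ of $\mathrm{supp}\,\mu_0$, then $\mu_0(A)>0$, and redistributing that mass to a region where the obstacle function equals $C$ strictly lowers $I$ — again contradicting minimality. Combining, $h_{\alpha,\mu_0}+\frac{V}{2}=C$ q.e.\ on $\mathrm{supp}\,\mu_0$.

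For the \emph{converse} characterization — that any $\mu$ satisfying \eqref{Eq_15} must equal $\mu_0$ — I would run the convexity argument in reverse. Given such a $\mu$, for the minimizer $\mu_0$ one writes $I(\mu)-I(\mu_0)$ and, using the strict-convexity inequality \eqref{Eq_6} from Proposition \ref{Prop_1} applied to $\mu-\mu_0$ (which has finite energy and total mass zero, so $\mathcal{E}_\alpha(\mu-\mu_0,\mu-\mu_0)\geq 0$ with equality iff $\mu=\mu_0$), plus the fact that $\int(h_{\alpha,\mu_0}+\frac{V}{2})\,d(\mu-\mu_0)\geq 0$ from \eqref{Eq_15}, one concludes $I(\mu)\geq I(\mu_0)$ with equality forcing $\mu=\mu_0$; since $\mu_0$ is the minimizer this pins down $\mu$.

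The main obstacle is the rigorous handling of the exceptional capacity-zero sets when passing between the integrated variational inequality and its pointwise ``q.e.'' form — in particular justifying that the obstacle function $h_{\alpha,\mu_0}+\frac{V}{2}$ is defined and lower semi-continuous q.e., that one may use small-ball averages $\frac{1}{|B_r^d(x)|}1_{B_r^d(x)}\,dx$ as legitimate finite-energy test measures (here the ultrametric structure actually helps, since these are genuinely supported on clopen sets and $\mathcal{E}_\alpha$ of them is finite and computable), and that a set of positive $\mu_0$-measure has positive capacity so the ``push mass around'' perturbations are admissible. The non-locality of $\boldsymbol{D}^\alpha$ is not an obstacle here because the present theorem only asserts the obstacle-problem characterization \eqref{Eq_15}; it is only later, in computing explicit equilibrium measures, that non-locality bites. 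I would lean on Fuglede's lemmas (already quoted) for all the potential-theoretic regularity, so that the proof stays close in spirit to \cite[Theorem 2.1]{Serfaty-Book}, with the Polish-space structure of $(\mathbb{Q}_p^d,\|\cdot\|_p)$ supplying the needed tightness and weak-compactness in Lemma \ref{Lemma_4}.
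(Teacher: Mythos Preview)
Your proposal is correct and follows essentially the same route as the paper: existence via Lemmas \ref{Lemma_4} and \ref{Lemma_5}, uniqueness via the strict convexity of Lemma \ref{Lemma_3}, and then the standard Frostman/Euler--Lagrange perturbation argument for the characterization \eqref{Eq_15}--\eqref{Eq_17}, exactly as in \cite[Theorem 2.1]{Serfaty-Book}. In fact the paper's own proof is little more than a pointer to Serfaty (after observing that $(\mathbb{Q}_p^d,\|\cdot\|_p)$ is Polish so Prokhorov applies), so your sketch is considerably more explicit than what appears there while remaining faithful to the intended argument.
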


\begin{proof}
The proof of this theorem is a slight variation of the proof of Theorem 2.1 in
Serfaty's book \cite{Serfaty-Book}. The reason is that the argument given in
\cite{Serfaty-Book} \ works on a Polish space. More precisely, we need
$\left(  \mathbb{Q}_{p}^{d},\left\Vert \cdot\right\Vert _{p}\right)  $ to be a
complete countable metric space, in order to use Prokhorov's theorem, see e.g.
\cite{Billinsgley}, and that a subset $K$ of $\mathbb{Q}_{p}^{d}$ is compact
if and only if it is closed and bounded. We give just some comments about the
proof. For the details the reader may consult \cite{Serfaty-Book}.

Set $I:=\inf_{\mu\in\mathcal{P}(\mathbb{Q}_{p}^{d})}I(\mu)$. Then by Lemma
\ref{Lemma_5}, $\inf I<+\infty$, and there exists a sequence $\left\{  \mu
_{n}\right\}  _{n}$ in $\mathcal{P}(\mathbb{Q}_{p}^{d})$ such that $I(\mu
_{n})\rightarrow I$. Since the sequence $\left\{  I(\mu_{n})\right\}  _{n}$ is
bounded, by Lemma \ref{Lemma_4}, there exists a probability measure $\mu_{0}$
such that a subsequence of $\{\mu_{n}\}_{n\in\mathbb{N}}$ converges in
probability to $\mu_{0}$ and
\[
I(\mu_{0})\leq\liminf_{n\rightarrow\infty}I(\mu_{n})\leq I.
\]
Then, by the definition of $I$, $I(\mu_{0})=I$. The uniqueness of $\mu_{0}$
follows from Lemma \ref{Lemma_3}. The proof now follows as in \cite[Theorem
2.1]{Serfaty-Book}.
\end{proof}

\section{\label{Section_Coulomb_gas_unit_ball}The Coulomb gas confined into
the unit ball}

We denote by $GL\left(  \mathbb{Z}_{p},d\right)  $ the group of all the
matrices $\boldsymbol{g}$\ of size $d\times d$ with entries in $\mathbb{Z}%
_{p}$ satisfying $\left\vert \det\boldsymbol{g}\right\vert _{p}=1$. This group
preserves the norm $\left\Vert \cdot\right\Vert _{p}$, i.e.
\[
\left\Vert \boldsymbol{g}x\right\Vert _{p}=\left\Vert x\right\Vert _{p}%
\qquad\text{for any }\boldsymbol{g}\mathfrak{\in}GL\left(  \mathbb{Z}%
_{p},d\right)  \text{ and any }x\in\mathbb{Q}_{p}^{d},
\]
see e.g. \cite[Lemma 3.16]{KKZuniga}. Given $\boldsymbol{g}\in GL\left(
\mathbb{Z}_{p},d\right)  $, we define the probability measure
\[
\mu_{\boldsymbol{g}}\left(  B\right)  =\mu_{0}\left(  \boldsymbol{g}%
^{-1}B\right)  ,\qquad\text{with }B\text{ a Borel subset of }\mathbb{Q}%
_{p}^{d},
\]
where $\mu_{0}$\ is the equilibrium measure given in Theorem \ref{Theorem_1}.
Then $\mu_{\boldsymbol{g}}$ is a probability measure supported in
$\boldsymbol{g}^{-1}K$, with $K=\operatorname{supp}\mu_{0}$. Furthermore,
\[
\int\nolimits_{K}f\left(  y\right)  d\mu_{0}\left(  y\right)  =\int
\nolimits_{\boldsymbol{g}^{-1}K}f(\boldsymbol{g}z)d\mu_{\boldsymbol{g}}\left(
z\right)  .
\]
If the potential is a radial function, i.e.\ $V(x)=V(\left\Vert x\right\Vert
_{p})$, then
\[
\int\nolimits_{\boldsymbol{g}^{-1}K}V(\left\Vert x\right\Vert _{p}%
)d\mu_{\boldsymbol{g}}\left(  z\right)  =\int\nolimits_{\boldsymbol{g}^{-1}%
K}V(\left\Vert \boldsymbol{g}x\right\Vert _{p})d\mu_{\boldsymbol{g}}\left(
z\right)  =\int\nolimits_{K}V(\left\Vert x\right\Vert _{p})d\mu_{0}\left(
z\right)  .
\]
Now we set
\[
h_{\alpha,\mu_{\boldsymbol{g}}}(x):=\int\nolimits_{\boldsymbol{g}^{-1}%
K}g_{\alpha}(x-\boldsymbol{g}z)d\mu_{\boldsymbol{g}}\left(  z\right)  .
\]
Then
\[
h_{\alpha,\mu_{\boldsymbol{g}}}(x)=\int\nolimits_{K}g_{\alpha}(x-z)d\mu
_{0}\left(  z\right)  .
\]
Consequently, the measure satisfies conditions (\ref{Eq_15}), and by the
uniqueness of $\mu_{0}$, we conclude that%
\[
\boldsymbol{g}K=K\qquad\text{for any }\boldsymbol{g}\in GL\left(
\mathbb{Z}_{p},d\right)  .
\]
Which implies that
\begin{equation}
K=\bigsqcup\limits_{j\in L}S_{j}^{d},\label{Eq_decomposition}%
\end{equation}
i.e. the support $K$ of the measure $\mu_{0}$ is a union of spheres.

\begin{proposition}
\label{Prop_7}Consider the potential%
\begin{equation}
V(x)=
\begin{cases}
V_{0}, & \text{if } \left\Vert x\right\Vert _{p}\leq1\\
+\infty, & \text{if } \left\Vert x\right\Vert _{p}>1,
\end{cases}
\label{Eq_20}%
\end{equation}
where $V_{0}$ is a positive real number. Then the equilibrium measure is the
characteristic function of the unit ball, i.e. $\mu_{0}\left(  x\right)
=\Omega\left(  \left\Vert x\right\Vert _{p}\right)  $, and
\begin{equation}
I(\mu_{0})=V_{0}+\frac{1-p^{-\alpha}}{1-p^{-d}}. \label{Eq_20AA}%
\end{equation}

\end{proposition}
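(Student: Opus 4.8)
The plan is to verify directly that the uniform probability measure $\mu_{0} = \Omega(\|x\|_p)\,dx$ on $\mathbb{Z}_p^d$ satisfies the characterization \eqref{Eq_15} of Theorem \ref{Theorem_1}; by uniqueness this identifies the equilibrium measure, and then a short computation gives the energy \eqref{Eq_20AA}. First I would note that, since $V$ forces every competitor measure to be supported in $B_0^d = \mathbb{Z}_p^d$, the support $K$ of $\mu_0$ is contained in $\mathbb{Z}_p^d$ and, by \eqref{Eq_decomposition}, is a union of spheres $S_j^d$ with $j \leq 0$. The natural candidate is the Haar (uniform) measure on the whole ball, i.e.\ $d\mu_0 = \Omega(\|x\|_p)\,dx$; so the task is to check that its electrostatic potential $h_{\alpha,\mu_0}$ is constant on $\mathbb{Z}_p^d$ and that $h_{\alpha,\mu_0}(x) + V_0/2 \geq C$ for all $x$ (the latter being automatic where $V = +\infty$).

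The key computation is evaluating
\[
h_{\alpha,\mu_0}(x) = \int_{\mathbb{Z}_p^d} \|x - y\|_p^{\alpha - d}\, dy
\qquad\text{for } x \in \mathbb{Z}_p^d.
\]
For $x \in \mathbb{Z}_p^d$, the change of variable $y \mapsto y + x$ leaves $\mathbb{Z}_p^d$ invariant, so $h_{\alpha,\mu_0}(x) = \int_{\mathbb{Z}_p^d} \|y\|_p^{\alpha-d}\, dy$, which is manifestly independent of $x$; this is where the ultrametric structure does the work that, in the Archimedean case, would require Newton's theorem. The remaining integral is a standard one-variable-radius computation: decomposing $\mathbb{Z}_p^d = \bigsqcup_{j \leq 0} S_j^d$ with $\mathrm{vol}(S_j^d) = p^{jd}(1 - p^{-d})$, one gets
\[
\int_{\mathbb{Z}_p^d} \|y\|_p^{\alpha-d}\, dy
= (1 - p^{-d}) \sum_{j=0}^{\infty} p^{-j\alpha}
= \frac{1 - p^{-d}}{1 - p^{-\alpha}}.
\]
Call this constant $\kappa = \frac{1-p^{-d}}{1-p^{-\alpha}}$, so $h_{\alpha,\mu_0} \equiv \kappa$ on $\mathbb{Z}_p^d$. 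Then the second line of \eqref{Eq_15} holds with $C = \kappa + V_0/2$; the first line holds trivially since outside $\mathbb{Z}_p^d$ the potential $V$ is $+\infty$ (so $h_{\alpha,\mu_0} + V/2 = +\infty \geq C$ there). By the uniqueness clause of Theorem \ref{Theorem_1}, $\mu_0 = \Omega(\|x\|_p)\,dx$ is the equilibrium measure.

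Finally, for the energy, I would use $I(\mu_0) = \mathcal{E}_\alpha(\mu_0,\mu_0) + \int V\, d\mu_0$. Since $\mu_0$ is supported in $\mathbb{Z}_p^d$ where $V = V_0$ and $\mu_0$ is a probability measure, $\int V\, d\mu_0 = V_0$. For the Coulomb energy, Fubini and the computation just performed give
\[
\mathcal{E}_\alpha(\mu_0,\mu_0)
= \int_{\mathbb{Z}_p^d} h_{\alpha,\mu_0}(x)\, dx
= \int_{\mathbb{Z}_p^d} \frac{1-p^{-d}}{1-p^{-\alpha}}\, dx
= \frac{1-p^{-d}}{1-p^{-\alpha}},
\]
since $\mathrm{vol}(\mathbb{Z}_p^d) = 1$. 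Hence $I(\mu_0) = V_0 + \frac{1-p^{-d}}{1-p^{-\alpha}}$. (I note the stated formula \eqref{Eq_20AA} reads $V_0 + \frac{1-p^{-\alpha}}{1-p^{-d}}$; unless $\alpha = d$ the two differ, so I would re-examine the normalization of the kernel and of Haar measure to reconcile this — possibly the authors use a differently normalized $g_\alpha$ here, e.g.\ dividing by $C_{d,\alpha}$ from \eqref{Eq_3B}, or there is a typo. The main obstacle is thus not the ultrametric averaging, which is easy, but getting every constant to match the paper's conventions.) Alternatively one can consult Lemma \ref{Lemma_1} with the double integral representation of $\mathcal{E}_\alpha$, but the direct route above seems cleanest.
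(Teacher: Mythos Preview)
Your argument is correct, and it is genuinely different from the paper's. The paper does not simply guess $\mu_0=\Omega(\|x\|_p)\,dx$ and verify; instead it \emph{derives} the candidate from the Euler--Lagrange relation. Starting from $\Omega(\|x\|_p)h_{\alpha,\mu_0}(x)=(C-V_0/2)\Omega(\|x\|_p)$, it passes to the Fourier side, applies the Taibleson operator $\boldsymbol{D}^\alpha$, uses the $\mathbb{Z}_p^d$--translation invariance of $\widehat{\mu_0}$ to factor $\widehat{\mu_0}(\xi)$ out of a convolution, and after computing $g_\alpha\ast\Omega$ explicitly obtains $\widehat{\mu_0}(\xi)=\frac{1-p^{-\alpha}}{1-p^{-d}}(C-V_0/2)\,\Omega(\|\xi\|_p)$; normalization then fixes the constant. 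Your route --- use the group structure of $\mathbb{Z}_p^d$ to see immediately that $h_{\alpha,\mu_0}$ is constant on the ball, evaluate the radial integral, and invoke the uniqueness clause of Theorem~\ref{Theorem_1} --- is shorter and avoids all distribution theory. What the paper's approach buys is a method that would, in principle, \emph{find} $\mu_0$ for other radial confinements rather than just confirm a guess; what yours buys is a two-line proof once the candidate is on the table.

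On the constant: you are right to flag it. The paper's own derivation gives $C-V_0/2=\frac{1-p^{-d}}{1-p^{-\alpha}}$, and combining this with \eqref{Eq_17} yields $I(\mu_0)=V_0+\frac{1-p^{-d}}{1-p^{-\alpha}}$, exactly your value; the displayed \eqref{Eq_20AA} has the fraction inverted. So the discrepancy is a typo in the statement, not a normalization issue on your side.
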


\begin{proof}
The support of the equilibrium measure is contained in $\mathbb{Z}_{p}^{d}$.
Notice that by the discussion presented at the beginning of this section we
cannot conclude that the support of $\mu_{0}$ is the unit ball, see
(\ref{Eq_decomposition}). So we proceed as follows. We compute a candidate to
the equilibrium measure assuming that $\operatorname{supp}\mu_{0}%
=\mathbb{Z}_{p}^{d}$, then we verify that the proposed measure satisfies
conditions (\ref{Eq_15}).

By restricting the formula given in (\ref{Eq_15}) to the unit ball, we have%
\begin{equation}
\Omega\left(  \left\Vert x\right\Vert _{p}\right)  h_{\alpha,\mu_{0}%
}(x)=\left(  C-\frac{V_{0}}{2}\right)  \Omega\left(  \left\Vert x\right\Vert
_{p}\right)  . \label{Eq_20A}%
\end{equation}
We apply the operator $\boldsymbol{D}^{\alpha}$, with domain $\left\{
T\in\mathcal{D}^{\prime}:\left\Vert \xi\right\Vert _{p}^{\alpha}\widehat{T}%
\in\mathcal{D}^{\prime}\right\}  $, to both sides in (\ref{Eq_20A}). We first
notice that $g_{\alpha}\left(  x\right)  =\frac{1}{\left\Vert x\right\Vert
_{p}^{d-\alpha}}$, $d>\alpha$, is locally integrable, and since $\mu
_{0}\left(  x\right)  $ has compact support, then $\mu_{0}\ast g_{\alpha}%
\in\mathcal{D}^{\prime}$, and $\widehat{\mu_{0}\ast g_{\alpha}}\left(
\xi\right)  =\widehat{\mu_{0}}\left(  \xi\right)  \widehat{g_{\alpha}}\left(
\xi\right)  $. Furthermore, since the support of $\mu_{0}\left(  x\right)  $
by supposition is the unit ball, then
\[
\mathcal{T}_{y}\widehat{\mu_{0}}\left(  \xi\right)  =\widehat{\mu_{0}}\left(
\xi\right)  \qquad\text{for any }y\in\mathbb{Z}_{p}^{d},
\]
where $\mathcal{T}_{y}$ is the translation operator defined as $\mathcal{T}%
_{y}\varphi\left(  x\right)  =\varphi\left(  x-y\right)  $, $\varphi
\in\mathcal{D}$, and $\left(  \mathcal{T}_{y}G,\varphi\right)  =\left(
G,\mathcal{T}_{-y}\varphi\right)  $, $\varphi\in\mathcal{D}$, $G\in
\mathcal{D}^{\prime}$, see e.g. \cite[Chap. III, Proposition 3.17]{Taibleson}.
Then
\begin{align*}
\widehat{\mu_{0}}\left(  \xi\right)  \widehat{g_{\alpha}}\left(  \xi\right)
\ast\Omega\left(  \left\Vert \xi\right\Vert _{p}\right)   &  =\left(
\widehat{\mu_{0}}\left(  y\right)  \widehat{g_{\alpha}}\left(  y\right)
,\Omega\left(  \left\Vert \xi-y\right\Vert _{p}\right)  \right) \\
=\left(  \widehat{\mu_{0}}\left(  y\right)  \widehat{g_{\alpha}}\left(
y\right)  ,\mathcal{T}_{-\xi}\Omega\left(  \left\Vert -y\right\Vert
_{p}\right)  \right)   &  =\left(  \mathcal{T}_{\xi}\widehat{\mu_{0}}\left(
y\right)  \,\mathcal{T}_{\xi}\widehat{g_{\alpha}}\left(  y\right)
,\Omega\left(  \left\Vert -y\right\Vert _{p}\right)  \right) \\
=\left(  \mathcal{T}_{y}\widehat{\mu_{0}}\left(  \xi\right)  \,\mathcal{T}%
_{\xi}\widehat{g_{\alpha}}\left(  y\right)  ,\Omega\left(  \left\Vert
-y\right\Vert _{p}\right)  \right)   &  =\widehat{\mu_{0}}\left(  \xi\right)
\left(  \mathcal{T}_{\xi}\widehat{g_{\alpha}}\left(  y\right)  ,\Omega\left(
\left\Vert -y\right\Vert _{p}\right)  \right) \\
=\widehat{\mu_{0}}\left(  y\right)  \left(  \widehat{g_{\alpha}}\left(
y\right)  ,\mathcal{T}_{-\xi}\Omega\left(  \left\Vert -y\right\Vert
_{p}\right)  \right)   &  =\widehat{\mu_{0}}\left(  \xi\right)  \left(
\widehat{g_{\alpha}}\left(  \xi\right)  \ast\Omega\left(  \left\Vert
\xi\right\Vert _{p}\right)  \right)  .
\end{align*}

We now compute
\begin{equation}
\label{Eq_20B}%
\begin{split}
\mathcal{F}_{\xi\rightarrow x}\left(  \boldsymbol{D}^{\alpha}\Omega\left(
\left\Vert x\right\Vert _{p}\right)  h_{\alpha,\mu_{0}}(x)\right)   &
=\left\Vert \xi\right\Vert _{p}^{\alpha}\,\left(  \widehat{\mu_{0}}\left(
\xi\right)  \widehat{g_{\alpha}}\left(  \xi\right)  \ast\Omega\left(
\left\Vert \xi\right\Vert _{p}\right)  \right) \\
&  =\widehat{\mu_{0}}\left(  \xi\right)  \left\Vert \xi\right\Vert
_{p}^{\alpha}\,\left(  \widehat{g_{\alpha}}\left(  \xi\right)  \ast
\Omega\left(  \left\Vert \xi\right\Vert _{p}\right)  \right)  .
\end{split}
\end{equation}
Then from (\ref{Eq_20A})--(\ref{Eq_20B}), we obtain%
\begin{equation}
\widehat{\mu_{0}}\left(  \xi\right)  \left(  \widehat{g_{\alpha}}\left(
\xi\right)  \ast\Omega\left(  \left\Vert \xi\right\Vert _{p}\right)  \right)
=\left(  C-\frac{V_{0}}{2}\right)  \Omega\left(  \left\Vert \xi\right\Vert
_{p}\right)  , \label{Eq_20C}%
\end{equation}
which implies that the distribution in the left side of (\ref{Eq_20C}) is
supported in the unit ball. And since the product of distributions is
associative,
\[
\widehat{\mu_{0}}\left(  \xi\right)  \left\{  \Omega\left(  \left\Vert
\xi\right\Vert _{p}\right)  \left(  \widehat{g_{\alpha}}\left(  \xi\right)
\ast\Omega\left(  \left\Vert \xi\right\Vert _{p}\right)  \right)  \right\}
=\left(  C-\frac{V_{0}}{2}\right)  \Omega\left(  \left\Vert \xi\right\Vert
_{p}\right)  .
\]
We now use that
\[
\widehat{g_{\alpha}}\left(  \xi\right)  \ast\Omega\left(  \left\Vert
\xi\right\Vert _{p}\right)  =
\begin{cases}
\frac{1-p^{-d}}{1-p^{-\alpha}}, & \text{if } \left\Vert \xi\right\Vert
_{p}\leq1\\
\frac{1}{\left\Vert \xi\right\Vert _{p}^{d-\alpha}}, & \text{if } \left\Vert
x\right\Vert _{p}>1,
\end{cases}
\]
to obtain $\widehat{\mu_{0}}\left(  \xi\right)  =\frac{1-p^{-\alpha}}%
{1-p^{-d}}\left(  C-\frac{V_{0}}{2}\right)  \Omega\left(  \left\Vert
\xi\right\Vert _{p}\right)  $, and hence
\[
\mu_{0}\left(  x\right)  =\frac{1-p^{-\alpha}}{1-p^{-d}}\left(  C-\frac{V_{0}%
}{2}\right)  \Omega\left(  \left\Vert x\right\Vert _{p}\right)  .
\]
Then necessarily $\frac{1-p^{-\alpha}}{1-p^{-d}}\left(  C-\frac{V_{0}}%
{2}\right)  =1$, which implies (\ref{Eq_20AA}). Finally, the verification that
$\mu_{0}\left(  x\right)  =\Omega\left(  \left\Vert x\right\Vert _{p}\right)
$ satisfies (\ref{Eq_15}) is straightforward.
\end{proof}

\section{Proof of the $\Gamma$-convergence and some consequences}

\subsection{\label{Section_Proof_Prop_7}Proof of Theorem \ref{Theo_1A}}

The proof is organized in the same form as the proof of Proposition 2.8 in
\cite{Serfaty-Book}. In the proof the topology of $\mathbb{Q}_{p}^{d}$ comes
into play, and consequently there are important differences with the classical case.

\textbf{Step 1. }($\Gamma-\liminf$) If $\frac{1}{n}\sum_{i=1}^{n}\delta
_{x_{i}}\rightarrow\mu$, then
\[
\liminf_{n\rightarrow+\infty}\frac{1}{n^{2}}H_{n}(x_{1},\ldots,x_{n})\geq
I(\mu).
\]
For the proof of this assertion the reader may consult \cite[pp.
23--24]{Serfaty-Book}.

\textbf{Step 2. }($\Gamma-\limsup$) We have to construct a recovery sequence
for each measure $\mu\in\mathcal{P}(\mathbb{Q}_{p}^{d})$ such that
$I(\mu)<+\infty$. Similarly to the proof of Proposition 2.8 from
\cite{Serfaty-Book} it is sufficient to prove the statement for compactly
supported measures. Moreover, by considering the $\delta$-approximating
sequence
\[
\delta_{n}(x):=
\begin{cases}
p^{nd}, & \text{if } \left\Vert x\right\Vert _{p}\leq p^{-n}\\
0, & \text{if } \left\Vert x\right\Vert _{p}>p^{-n},
\end{cases}
\]
and the convolutions $\mu_{n}=\mu\ast\delta_{n}$ and repeating the
corresponding part of the proof of Proposition 2.8 from \cite{Serfaty-Book},
we may further assume that $\mu$ is supported in some ball $B_{L}^{d}=p^{-L}
\mathbb{Z}_{p}^{d}$, $L\ge0$, has a density in $\mathcal{D}_{\mathbb{R}%
}(\mathbb{Q}_{p}^{d})$, and that this density is bounded from below by
$\epsilon>0$, from above by $p^{Kd}-1$ for some $K\in\mathbb{N}$ and its index
of local constancy $l\left(  \mu\right)  $ satisfies $l\left(  \mu\right)
\geq-M_{0}$ for some $M_{0}\in\mathbb{N}$.

\textbf{Step 3.} Let us fix some $M\ge M_{0}$. There are $p^{(L+M)d}$ balls of
radius $p^{-M}$ in the support of the measure $\mu$. Let us denote them as
$B_{k}$, $1\le k\le p^{(L+M)d}$. In each of these balls consider $p^{(M+K)d}$
smaller balls of radius $p^{-2M-K}$.

Now we distribute $p^{2Md}$ points into these larger balls as follows. In each
ball $B_{k}$ we place $\left[  p^{2Md}\mu(B_{k})\right]  +\epsilon_{k}$
points, here $[x]$ denotes the largest integer not exceeding $x$, and we take
$\epsilon_{k}$ equal to $0$ or $1$ so that the total number of distributed
points equals $p^{2Md}$. The total number of points in the ball $B_{k}$ does
not exceed
\[
p^{2Md}\mu(B_{k})+1 \le p^{2Md}\cdot(p^{Kd}-1) \cdot p^{-Md}+1 \le
p^{(M+K)d},
\]
that is there are sufficiently many smaller balls (of radius $p^{-2M-K}$) to
choose at most one point in each smaller ball. In such way we may select
$p^{2Md}$ points $x_{1},\ldots,x_{p^{2Md}}$ and the distance between any two
points will be at least $p^{-2M-K+1}$.

Consider the measure $\mu_{M}:=p^{-2Md}\sum_{i=1}^{p^{2Md}}\delta_{x_{i}}$ and
let us show that $\mu_{M}\rightharpoonup\mu$, $M\to+\infty$ in the weak sense
of probabilities.

Let us fix a test function $\varphi\in\mathcal{D}_{\mathbb{R}}(\mathbb{Q}%
_{p}^{d})$ and let $M\ge M_{0}$ be such that the index of local constancy
$l\left(  \varphi\right)  $ satisfies $l\left(  \varphi\right)  \geq-M$. Both
the density of the measure $\mu$ and the function $\varphi$ are constant on
the balls $B_{k}$ of radius $p^{-M}$. Denote by $b_{k}$ a point in the ball
$B_{k}$. Then
\[%
\begin{split}
\int_{\mathbb{Q}_{p}^{d}} \varphi(x)\, d\mu_{M}(x) - \int_{\mathbb{Q}_{p}^{d}}
\varphi(x)\, d\mu(x)  &  = \frac1{p^{2Md}}\sum_{i=1}^{p^{2Md}} \varphi(x_{i})
- \sum_{k=1}^{p^{(M+L)d}} \varphi(b_{k})\mu(B_{k})\\
&  =\sum_{k=1}^{p^{(M+L)d}} \varphi(b_{k})\left(  \frac{\left[  p^{2Md}%
\mu(B_{k})\right]  +\epsilon_{k}}{p^{2Md}} - \mu(B_{k})\right)  ,
\end{split}
\]
where $\epsilon_{k}$ denotes 0 or 1 depending on the selection of the points
$x_{i}$. The last expression converges to zero as $M\to+\infty$ since
\[
-\frac{1}{p^{2Md}}\le\frac{\left[  p^{2Md}\mu(B_{k})\right]  +\epsilon_{k}%
}{p^{2Md}} - \mu(B_{k})\le\frac{1}{p^{2Md}}.
\]

\textbf{Step 4.} Let $M$ be fixed and the points $x_{1},\ldots,x_{p^{2Md}}$ be
chosen as in Step 3. We denote by $\Delta$ the diagonal of $\mathbb{Q}_{p}%
^{d}\times\mathbb{Q}_{p}^{d}$, and set $\Delta^{c}:=\mathbb{Q}_{p}^{d}%
\times\mathbb{Q}_{p}^{d}\smallsetminus\Delta$. Then
\[%
\begin{split}
\left(  \frac{1}{p^{2Md}}\right)  ^{2} H_{p^{2Md}}(\mu_{M})  &  = \left(
\frac{1}{p^{2Md}}\right)  ^{2} H_{p^{2Md}}(x_{1},\ldots,x_{p^{2Md}})\\
&  = \frac1{p^{4Md}}\sum_{\substack{i,j=1\\i\ne j}}^{p^{2Md}} g_{\alpha
}\left(  x_{i}-x_{j}\right)  + \frac1{p^{2Md}}\sum_{i=1}^{p^{2Md}}V(x_{i})\\
&  =\iint_{\Delta^{c}}g_{\alpha}(x-y)d\mu_{M}( x) d\mu_{M}( y) +
\int_{\mathbb{Q}_{p}^{d}} V(x)\,d\mu_{M}(x).
\end{split}
\]

Consider
\begin{multline*}
\iint_{\Delta^{c}}g_{\alpha}(x-y)d\mu_{M}( x) d\mu_{M}( y) =\\
\iint_{\Delta^{c}}( \Omega g_{\alpha}) (\left\Vert x-y\right\Vert _{p}%
)d\mu_{M}( x) d\mu_{M}( y) +\iint_{\Delta^{c}}\left[  \left(  1-\Omega\right)
g_{\alpha}\right]  (\left\Vert x-y\right\Vert _{p})d\mu_{M}( x) d\mu_{M}( y)\\
=:E_{M}^{\left(  0\right)  }+E_{M}^{\left(  1\right)  },
\end{multline*}
where $\Omega\bigl(  \left\Vert x\right\Vert _{p}\bigr)  $ is the
characteristic function of $\mathbb{Z}_{p}^{d}$. The function
\[
\Omega\bigl(  p^{-L}\left\Vert x\right\Vert _{p}\bigr)  \left(  1-\Omega
\bigl(  \left\Vert x\right\Vert _{p}\bigr)  \right)  g_{\alpha}\left(
x\right)
\]
is a test function supported in the ball $B_{L}^{d}$, where $\mu_{M}$ and
$\mu$ are supported. By using that $\mu_{M}\rightharpoonup\mu$, we conclude
that $E_{M}^{\left(  1\right)  }$ converges to
\[
\iint_{\Delta^{c}} \left(  \left(  1-\Omega\right)  g_{\alpha}\right)
(\left\Vert x-y\right\Vert _{p})d\mu( x) d\mu( y) .
\]

\textbf{Claim}:
\[
\lim_{M\rightarrow+\infty}E_{M}^{\left(  0\right)  }=\iint_{\Delta^{c}}\left(
\Omega g_{\alpha}\right)  (\left\Vert x-y\right\Vert _{p})d\mu( x) d\mu( y).
\]

Since $\left(  \Omega g_{\alpha}\right)  \bigl(  \left\Vert x\right\Vert
_{p}\bigr)= \frac{\Omega\left(  \left\Vert x\right\Vert _{p}\right)
}{\left\Vert x\right\Vert _{p}^{d-\alpha}}\in L_{\mathbb{R}}^{1}\left(
\mathbb{Z}_{p} ^{d},dx\right)  $ and $\mathcal{D}_{\mathbb{R}}(\mathbb{Z}%
_{p}^{d})$ is dense in $L_{\mathbb{R}}^{1}\left(  \mathbb{Z}_{p}%
^{d},dx\right)  $, given $\epsilon>0$ there exists $\varphi\in\mathcal{D}%
_{\mathbb{R}}(\mathbb{Z}_{p}^{d})$ satisfying
\begin{equation}
\left\Vert \varphi\left(  x\right)  -\frac{\Omega\bigl( \left\Vert
x\right\Vert _{p}\bigr) }{\left\Vert x\right\Vert _{p}^{d-\alpha}}\right\Vert
_{1}<\epsilon. \label{Eq_30}%
\end{equation}

Now, by using (\ref{Eq_30}) and the Young's inequality $\left\Vert \nu\ast
f\right\Vert _{1}\leq\left\Vert f\right\Vert _{1}\left\Vert \nu\right\Vert ,$
where $\nu$ is a finite Borel measure, $\|\nu\|$ is its total variation and
$f\in L_{\mathbb{R}}^{1}$, we obtain
\[%
\begin{split}
\biggl\vert E_{M}^{\left(  0\right)  }  &  -\iint_{\Delta^{c}}\left(  \Omega
g_{\alpha}\right)  (\left\Vert x-y\right\Vert _{p})d\mu( x) d\mu( y)
\biggr\vert\\
&  \leq\left\vert \iint_{\Delta^{c}}\biggl\{ \varphi\left(  x-y\right)
-\frac{\Omega\bigl( \left\Vert x-y\right\Vert _{p}\bigr) }{\left\Vert
x-y\right\Vert _{p}^{d-\alpha}}\biggr\} d\mu_{M}( x) d\mu_{M}( y) \right\vert
\\
&  \quad+\left\vert \iint_{\Delta^{c}}\varphi\left(  x-y\right)  d\mu_{M}( x)
d\mu_{M}( y) -\iint_{\Delta^{c}}\varphi\left(  x-y\right)  d\mu( x) d\mu( y)
\right\vert \\
&  \quad+\left\vert \iint_{\Delta^{c}}\biggl\{ \varphi\left(  x-y\right)
-\frac{\Omega\bigl( \left\Vert x-y\right\Vert _{p}\bigr) }{\left\Vert
x-y\right\Vert _{p}^{d-\alpha}}\biggr\} d\mu( x) d\mu( y) \right\vert ,
\end{split}
\]
hence
\[
\lim_{M\rightarrow+\infty}\left\vert E_{M}^{\left(  0\right)  }-\iint
_{\Delta^{c}}\left(  \Omega g_{\alpha}\right)  (\left\Vert x-y\right\Vert
_{p})d\mu( x) d\mu( y) \right\vert \leq2 \epsilon,
\]
which implies the announced Claim.

Therefore,
\[
\limsup_{M\rightarrow+\infty}\iint_{\Delta^{c}}g_{\alpha}(x-y)d\mu_{M}( x)
d\mu_{M}( y) \leq\iint_{\Delta^{c} }g_{\alpha}(x-y)d\mu( x) d\mu( y)
\]
(actually, equality holds).

On the other hand, since $V$ is continuous, $\mu_{M}\rightharpoonup\mu$, and
$\mu_{M}$, $\mu$ are supported in $B_{L}^{d}$, we have $\int Vd\mu
_{M}\rightarrow\int Vd\mu$.

In conclusion,
\[
\limsup_{M\rightarrow+\infty}\left(  \frac{1}{p^{2Md}}\right)  ^{2}H_{p^{2Md}%
}\left(  \mu_{M}\right)  \leq I(\mu).
\]

\subsection{Some further results}

\begin{lemma}
[{\cite[Lemma 2.21]{Serfaty-Book}}]\label{Lemma_6}Assume that $V$ satisfies
(A1)--(A2). Let
\[
\left\{  \left(  x_{1},\ldots,x_{n}\right)  \right\}  _{n}\in\left(
\mathbb{Q}_{p}^{d}\right)  ^{n}%
\]
be a sequence of configurations, and let $\left\{  \mu_{n}\right\}  _{n}$ be
associated empirical measures (defined by $\mu_{n}=\frac{1}{n}\sum_{i=1}%
^{n}\delta_{x_{i}}$). Assume that $\left\{  \frac{1}{n^{2}}H_{n}\left(
x_{1},\ldots,x_{n}\right)  \right\}  _{n}$ is a bounded sequence. Then the
sequence $\left\{  \mu_{n}\right\}  _{n}$ is tight, and as $n\rightarrow
+\infty$, it converges weakly in $\mathcal{P}(\mathbb{Q}_{p}^{d})$ (up to
extraction of a subsequence) to some probability measure $\mu$.
\end{lemma}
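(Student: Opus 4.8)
The plan is to derive tightness of the sequence $\{\mu_{n}\}_{n}$ from the uniform bound on $\frac{1}{n^{2}}H_{n}$ together with the confinement hypothesis (A2), and then to conclude by Prokhorov's theorem, which applies because $\left(\mathbb{Q}_{p}^{d},\|\cdot\|_{p}\right)$ is a separable complete metric space.

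First I record the basic inequality. Since $g_{\alpha}(x)=\|x\|_{p}^{\alpha-d}$ with $d>\alpha>0$, the kernel is nonnegative on $\mathbb{Q}_{p}^{d}\smallsetminus\{0\}$ and $g_{\alpha}(0)=+\infty$; hence whenever $\frac{1}{n^{2}}H_{n}(x_{1},\ldots,x_{n})$ is finite the points $x_{1},\ldots,x_{n}$ are pairwise distinct, each $V(x_{i})$ is finite, and
\[
\frac{1}{n^{2}}H_{n}(x_{1},\ldots,x_{n})=\iint_{\Delta^{c}}g_{\alpha}(x-y)\,d\mu_{n}(x)\,d\mu_{n}(y)+\int_{\mathbb{Q}_{p}^{d}}V(x)\,d\mu_{n}(x),
\]
with the double integral $\geq 0$. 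As $\{\frac{1}{n^{2}}H_{n}\}_{n}$ is bounded above by some constant $C$, this gives $\int_{\mathbb{Q}_{p}^{d}}V\,d\mu_{n}\leq C$ for all $n$.

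Next I use that, since $g_{\alpha}(x)\to 0$ as $\|x\|_{p}\to\infty$, condition (A2) is equivalent to $\lim_{\|x\|_{p}\to\infty}V(x)=+\infty$, and that $V\geq-b$ for some $b\geq 0$ by (A1). Fix $\epsilon>0$, pick $\Lambda>(C+b)/\epsilon$, and choose $N\in\mathbb{Z}$ with $V(x)>\Lambda$ for every $x\notin B_{N}^{d}$. Bounding $V$ below by $-b$ on $B_{N}^{d}$ and by $\Lambda$ on its complement, and using that $\mu_{n}$ is a probability measure,
\[
C\;\geq\;\int_{\mathbb{Q}_{p}^{d}}V\,d\mu_{n}\;\geq\;-b+\Lambda\,\mu_{n}\bigl(\mathbb{Q}_{p}^{d}\smallsetminus B_{N}^{d}\bigr),
\]
so $\mu_{n}\bigl(\mathbb{Q}_{p}^{d}\smallsetminus B_{N}^{d}\bigr)\leq(C+b)/\Lambda<\epsilon$ for every $n$. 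Since the ball $B_{N}^{d}$ is compact in $\mathbb{Q}_{p}^{d}$, this is exactly the tightness of $\{\mu_{n}\}_{n}$.

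Finally, Prokhorov's theorem shows that $\{\mu_{n}\}_{n}$ is relatively compact in $\mathcal{P}(\mathbb{Q}_{p}^{d})$ for the weak topology, so a subsequence converges weakly to some $\mu\in\mathcal{P}(\mathbb{Q}_{p}^{d})$ (no mass escapes, by tightness). This reproduces the proof of \cite[Lemma 2.21]{Serfaty-Book}; the only genuinely non-Archimedean input is that compact subsets of $\mathbb{Q}_{p}^{d}$ coincide with the closed bounded ones, which lets one take the ball $B_{N}^{d}$ in the tightness criterion. The step requiring the most care — and the nearest thing to an obstacle — is verifying that boundedness of $\frac{1}{n^{2}}H_{n}$ forces the configurations to be non-degenerate (distinct points, finite potential values), so that the displayed identity is valid and the nonnegative energy term can be dropped.
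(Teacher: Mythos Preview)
Your proof is correct and follows exactly the approach of \cite[Lemma 2.21]{Serfaty-Book}, which the paper simply cites without reproducing a proof. The only adaptations needed for the $p$-adic setting---that $\mathbb{Q}_{p}^{d}$ is Polish so Prokhorov applies, and that balls $B_{N}^{d}$ are compact---are precisely the ones you identify.
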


\begin{theorem}
\label{Theorem_2}Assume that $V$ is continuous and satisfies (A2). Assume that
for each $n$, $\left\{  \left(  x_{1},\ldots,x_{n}\right)  \right\}  _{n}$ is
a minimizer of $H_{n}$. Then
\[
\frac{1}{n}\sum_{i=1}^{n}\delta_{x_{i}}\rightarrow\mu_{0}\qquad\text{in the
weak sense of probability measures,}%
\]
where $\mu_{0}$ is the unique minimizer of $I$ as in Theorem \ref{Theorem_1},
and
\[
\lim_{n\rightarrow+\infty}\frac{1}{n^{2}}H_{n}\left(  x_{1},\ldots
,x_{n}\right)  =I(\mu_{0}).
\]

\end{theorem}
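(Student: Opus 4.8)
The plan is to derive Theorem~\ref{Theorem_2} from the three facts already in hand: the $\Gamma$-convergence of $\tfrac{1}{n^{2}}H_{n}$ to $I$ (Theorem~\ref{Theo_1A}), the abstract convergence-of-minimizers lemma (Lemma~\ref{Lemma_2}), and the tightness of minimizing empirical measures (Lemma~\ref{Lemma_6}); the outline follows Serfaty's, the $p$-adic input having been absorbed into those lemmas. First I would observe that the standing hypotheses place us inside the earlier results: since $d>\alpha$ we have $g_{\alpha}(x)=\|x\|_{p}^{\alpha-d}\to 0$ as $\|x\|_{p}\to\infty$, so (A2) is equivalent to $V(x)\to+\infty$ as $\|x\|_{p}\to\infty$; as $V$ is continuous it is then bounded below and attains its infimum on a large enough ball, so (A1) holds, and $\{x:V(x)<+\infty\}=\mathbb{Q}_{p}^{d}$ has positive capacity (the measure with density $\Omega(\|\cdot\|_{p})$ lies in $\mathcal{P}(\mathbb{Z}_{p}^{d})$ with finite energy), so (A3) holds. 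Hence Theorem~\ref{Theorem_1} applies: the equilibrium measure $\mu_{0}$ exists, is unique, and $I(\mu_{0})<+\infty$; and Theorem~\ref{Theo_1A} and Lemma~\ref{Lemma_6} apply as well.

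Let $\mu_{n}:=\tfrac{1}{n}\sum_{i=1}^{n}\delta_{x_{i}}$ be the empirical measure of a minimizer $(x_{1},\dots,x_{n})$ of $H_{n}$; viewing $\tfrac{1}{n^{2}}H_{n}$ as a functional on $\mathcal{P}(\mathbb{Q}_{p}^{d})$ as in Section~\ref{Section_Proof_Prop_7}, $\mu_{n}$ minimizes $\tfrac{1}{n^{2}}H_{n}$ over all of $\mathcal{P}(\mathbb{Q}_{p}^{d})$ and $\tfrac{1}{n^{2}}H_{n}(\mu_{n})=\tfrac{1}{n^{2}}H_{n}(x_{1},\dots,x_{n})$. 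I would show this quantity stays bounded: it is $\geq\inf V>-\infty$ since $g_{\alpha}\geq 0$ and $\tfrac{1}{n^{2}}\,n\sum_{i}V(x_{i})=\tfrac{1}{n}\sum_{i}V(x_{i})\geq\inf V$; and it is bounded above because the $\Gamma$-$\limsup$ part of Theorem~\ref{Theo_1A} provides a recovery sequence $\nu_{n}\rightharpoonup\mu_{0}$ with $\limsup_{n}\tfrac{1}{n^{2}}H_{n}(\nu_{n})\leq I(\mu_{0})<+\infty$, while minimality forces $\tfrac{1}{n^{2}}H_{n}(x_{1},\dots,x_{n})\leq\tfrac{1}{n^{2}}H_{n}(\nu_{n})$. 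With this bound, Lemma~\ref{Lemma_6} gives that $\{\mu_{n}\}_{n}$ is tight, hence, by Prokhorov's theorem, relatively compact in the weak topology of $\mathcal{P}(\mathbb{Q}_{p}^{d})$.

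Now I would identify every subsequential limit. Let $\mu_{n_{k}}\rightharpoonup\nu$ be any weakly convergent subsequence. Since $\Gamma$-convergence passes to subsequences, $\{\tfrac{1}{n_{k}^{2}}H_{n_{k}}\}_{k}$ still $\Gamma$-converges to $I$, and since each $\mu_{n_{k}}$ minimizes $\tfrac{1}{n_{k}^{2}}H_{n_{k}}$, Lemma~\ref{Lemma_2} gives that $\nu$ minimizes $I$ and $\tfrac{1}{n_{k}^{2}}H_{n_{k}}(x_{1},\dots,x_{n_{k}})\to\min_{\mathcal{P}(\mathbb{Q}_{p}^{d})}I$. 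By the uniqueness in Theorem~\ref{Theorem_1}, $\nu=\mu_{0}$. So every weakly convergent subsequence of the tight family $\{\mu_{n}\}_{n}$ has limit $\mu_{0}$; as the weak topology on $\mathcal{P}(\mathbb{Q}_{p}^{d})$ is metrizable ($\mathbb{Q}_{p}^{d}$ being Polish), this forces $\mu_{n}\rightharpoonup\mu_{0}$. Applying Lemma~\ref{Lemma_2} once more to this now-convergent sequence, $\lim_{n}\tfrac{1}{n^{2}}H_{n}(x_{1},\dots,x_{n})=\lim_{n}\min_{\mathcal{P}(\mathbb{Q}_{p}^{d})}\tfrac{1}{n^{2}}H_{n}=\min_{\mathcal{P}(\mathbb{Q}_{p}^{d})}I=I(\mu_{0})$, which completes the proof.

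The argument is largely a formal assembly of the cited results, so there is no deep obstacle; the one place needing attention is the uniform upper bound on $\min_{\mathcal{P}(\mathbb{Q}_{p}^{d})}\tfrac{1}{n^{2}}H_{n}$, that is, the existence, for \emph{every} $n$, of a competitor configuration whose normalized energy stays bounded. This follows from the $\Gamma$-$\limsup$ half of Theorem~\ref{Theo_1A} once one knows the recovery sequences there are indexed by all $n$ (the construction in Section~\ref{Section_Proof_Prop_7} is written along $n=p^{2Md}$); alternatively, for $n=p^{kd}$ one may place one point in each ball of radius $p^{-k}$ tiling $\mathbb{Z}_{p}^{d}$, for which a direct computation gives normalized energy tending to $\mathcal{E}_{\alpha}(\Omega,\Omega)+\int_{\mathbb{Z}_{p}^{d}}V\,dx<\infty$, and interpolate in $n$.
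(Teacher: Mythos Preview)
Your proposal is correct and follows essentially the same approach as the paper's own proof, which simply cites Lemma~\ref{Lemma_6}, Proposition~\ref{Prop_1}, Theorems~\ref{Theo_1A} and~\ref{Theorem_1}, and defers to Serfaty's argument for Theorem~2.2; you have just spelled out that assembly explicitly. Your flagging of the recovery-sequence indexing (the construction in Section~\ref{Section_Proof_Prop_7} runs along $n=p^{2Md}$) is a legitimate point of care not addressed in the paper, and your suggested fix---an explicit competitor on $\mathbb{Z}_{p}^{d}$ plus interpolation---is adequate to produce the uniform upper bound on $\min_{\mathcal{P}(\mathbb{Q}_{p}^{d})}\tfrac{1}{n^{2}}H_{n}$ needed to invoke Lemma~\ref{Lemma_6} for the full sequence.
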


\begin{proof}
The proof follows from Lemma \ref{Lemma_6}, Proposition \ref{Prop_1}\ and
Theorems \ref{Theo_1A}, \ref{Theorem_1}, by using the reasoning given in
\cite{Serfaty-Book} for the proof of Theorem 2.2.
\end{proof}

\subsection{\label{Section_Spin_glass}Continuum limits of hierarchical models}

The energy function $-I\left(  \mu\right)  $ is the continuum limit of a
$p$-adic hierarchical Hamiltonian, which corresponds to a certain type of
$p$-adic hierarchical spin glass. A similar result was established by Lerner
and Missarov \cite[Theorem 2]{Lerner-Missarov}, see also \cite[Section
C]{Gubser et al}. For $L\in\mathbb{Z}$ fixed, we take the potential
\[
V\left(  x\right)  =
\begin{cases}
V_{0}\left(  x\right),  & \text{if }  x\in B_{L}^{d},\\
+\infty, & \text{if } x\notin B_{L}^{d},
\end{cases}
\]
where $V_{0}:B_{L}^{d}\rightarrow\mathbb{R}$ is a continuous function. Let
$\mathcal{P}(B_{L}^{d})$ denote the space of probability distributions
supported in $B_{L}^{d}$. Consider the functional%
\[
I_{L}(\mu):=\int_{B_{L}^{d}}\,\int_{B_{L}^{d}}g_{\alpha}\left(  x-y\right)
d\mu\left(  x\right)  d\mu\left(  y\right)  +\int_{B_{L}^{d}}V_{0}\left(
x\right)  d\mu\left(  x\right)  <\infty,
\]
for $\mu\in\mathcal{P}(B_{L}^{d})$. There exists a probability measure
$\mu_{0}\in\mathcal{P}(B_{L}^{d})$ such that $\min_{\mu\in\mathcal{P}%
(B_{L}^{d})}I_{L}(\mu)=I_{L}(\mu_{0})$. \ This fact follows from\ Theorem
\ref{Theorem_1} by noticing that the equilibrium measure must be supported in
$B_{L}^{d}$ due to the fact that the potential $V$ is infinite outside of this ball.

We set $G_{l}:=p^{-L}\mathbb{Z}_{p}^{d}/p^{l}\mathbb{Z}_{p}^{d}$, with
$l\geq-L$. \ By fixing an identification of $G_{l}$ with a subset of
$\mathbb{Q}_{p}^{d}$, $G_{l}$ becomes a finite ultrametric space, see e.g.
\cite[Section 3]{Zuniga-Nonlinearity}. We also pick $\rho:B_{L}^{d}%
\rightarrow\left[  0,\infty\right)  $ a continuous function. We now define the
following approximations of $\rho$ and $V_{0}$:%
\[
\rho_{l}\left(  x\right)  =\sum_{\widetilde{x}\in G_{l}}
\rho\left(  \widetilde{x}\right)  \Omega\left(  p^{l}\left\Vert x-\widetilde
{x}\right\Vert _{p}\right) \qquad \text{for }l\geq-L,
\]
and
\[
V_{l}\left(  x\right)  =
\sum_{\widetilde{x}\in G_{l}}
V_{0}\left(  \widetilde{x}\right)  \Omega\left(  p^{l}\left\Vert
x-\widetilde{x}\right\Vert _{p}\right)\qquad  \text{for }l\geq-L,
\]
which are test functions supported in $B_{L}^{d}$ satisfying that $\rho_{l}\overset{\left\Vert \cdot\right\Vert _\infty}{\rightarrow}\rho$ and
 $V_{l}\overset{\left\Vert \cdot\right\Vert_\infty}{\rightarrow}V_{0}$, see e.g. \cite[Lemma 1]{Zuniga-Nonlinearity}. Then
\[
I_{L}(\rho_{l}\,dx):=\sum_{\widetilde{x},\,\widetilde{y}\in G_{l}}
p^{-2ld}J_{\widetilde{x}\,\widetilde{y}}\rho\left(  \widetilde
{x}\right)  \rho\left(  \widetilde{y}\right)  +
\sum_{\widetilde{x}\in G_{l}}
p^{-ld}\rho\left(  \widetilde{x}\right)  V_{0}\left(  \widetilde{x}\right)  ,
\]
where%
\[
J_{\widetilde{x}\,\widetilde{y}}=
\begin{cases}
\left\Vert \widetilde{x}-\widetilde{y}\right\Vert _{p}^{\alpha-d}, & \text{if } \widetilde{x}\neq\widetilde{y},\\
\frac{p^{-l\left(  d+\alpha\right)  }\left(  1-p^{-d}\right)  }{1-p^{-\alpha}},
& \text{if } \widetilde{x}=\widetilde{y}.
\end{cases}
\]
The function $-I_{L}(\rho_{l}\,dx)$ is the Hamiltonian of a spin glass with $p$-adic
coupling, see \cite[Section C]{Gubser et al}. We now show that
\[
\lim_{l\rightarrow\infty}I_{L}(\rho_{l}\,dx)=I_{L}(\rho\, dx).
\]
Indeed, since $\rho_{l}\overset{\left\Vert \cdot\right\Vert _\infty}{\rightarrow}\rho$, $V_{l}\overset{\left\Vert \cdot\right\Vert_\infty}{\rightarrow}V_{0}$, there is a positive constant $C$ such that $\left\Vert
\rho_{l}\right\Vert _{\infty}<C\left\Vert \rho\right\Vert _{\infty}$ and
$\left\Vert V_{l}\right\Vert _{\infty}<C\left\Vert V_{0}\right\Vert _{\infty}$
for $l$ sufficiently large. Consequently by the dominated convergence
lemma,
\[
\int_{B_{L}^{d}}\,\int_{B_{L}^{d}}g_{\alpha}\left(  x-y\right)  \rho
_{l}\left(  x\right)  \rho_{l}\left(  y\right)  dxdy\rightarrow\int_{B_{L}%
^{d}}\,\int_{B_{L}^{d}}g_{\alpha}\left(  x-y\right)  \rho\left(  x\right)
\rho\left(  y\right)  dxdy,
\]
and
\[
\int_{B_{L}^{d}}V_{l}\left(  x\right)  \rho_{l}\left(  x\right)
dx\rightarrow\int_{B_{L}^{d}}V_{0}\left(  x\right)  \rho\left(  x\right)  dx,\qquad l\to\infty.
\]


\begin{thebibliography}{99}                                                                                               %


\bibitem {Adams et al}Adams D. R., Hedberg L. I., Function spaces and
potential theory, Springer-Verlag, Berlin, 1996.

\bibitem {A-K-S}Albeverio S., Khrennikov A. Yu., Shelkovich V. M., Theory of
$p$-adic distributions: linear and nonlinear models, Cambridge University
Press, 2010.

\bibitem {Av-4}Avetisov V. A., Bikulov A. Kh., Osipov V. A., $p$-Adic
description of characteristic relaxation in complex systems, J. Phys. A 36
(2003), no. 15, 4239--4246.

\bibitem {Av-5}Avetisov V. A., Bikulov A. H., Kozyrev S. V., Osipov V. A.,
$p$-Adic models of ultrametric diffusion constrained by hierarchical energy
landscapes, J. Phys. A 35 (2002), no. 2, 177--189.

\bibitem {Billinsgley}Billingsley P., Convergence of probability measures,
John Wiley \& Sons, Inc., New York-London-Sydney, 1968.

\bibitem {Bendikov1}Bendikov A., Heat kernels for isotropic-like Markov
generators on ultrametric spaces: a survey, p-Adic Numbers Ultrametric Anal.
Appl. 10 (2018), no. 1, 1--11.

\bibitem {Bendiko t al}Bendikov A. D., Grigor'yan A. A., Pitt\`{e} K.,
V\"{e}ss V., Isotropic Markov semigroups on ultra-metric spaces, Russian Math.
Surveys 69 (2014), no. 4, 589--680.

\bibitem {Berg-Forst}Berg Ch., Forst G., Potential theory on locally compact
abelian groups, Springer-Verlag, New York-Heidelberg, 1975.

\bibitem {Bjorn et al}Bj\"{o}rn A., Bj\"{o}rn Ja., Nonlinear potential theory
on metric spaces, Z\"{u}rich, 2011.

\bibitem {D-K-K-V}Dragovich B., Khrennikov A. Yu., Kozyrev S. V., Volovich I.
V., On $p-$adic mathematical physics, $p-$adic Numbers Ultrametric Anal. Appl.
1 (2009), no. 1, 1--17.

\bibitem {DysonFreeman}Dyson F. J., An Ising ferromagnet with discontinuous
long-range order, Comm. Math. Phys. 21 (1971), 269--283.

\bibitem {Ellenberg et al}Ellenberg J. S., Jain S., Venkatesh A., Modeling
$\lambda$-invariants by $p$-adic random matrices, Comm. Pure Appl. Math. 64
(2011), no. 9, 1243--1262.

\bibitem {Evans}Evans S. N., Elementary divisors and determinants of random
matrices over a local field, Stochastic Process. Appl. 102 (2002), no. 1, 89--102.

\bibitem {Fraunfelder et al}Frauenfelder H., Chan S. S., Chan W. S. (eds), The
Physics of Proteins, Springer-Verlag, 2010.

\bibitem {Fuglede}Fuglede B., On the theory of potentials in locally compact
spaces, Acta Math. 103 (1960), 139--215.

\bibitem {Gubser et al}Gubser S. S., Jepsen Ch., Ji Z., Trundy
B., Continuum limits of sparse coupling patterns, Phys. Rev. D 98 (2018),
no. 4, 045009, 17 pp.

\bibitem {Koch}Kochubei A. N., Pseudo-differential equations and stochastics
over non-Archimedean fields, Marcel Dekker, New York, 2001.

\bibitem {Khrennikov et al}Khrennikov A. Yu., Mukhamedov F. M., Mendes J. F.
F., On p-adic Gibbs measures of the countable state Potts model on the Cayley
tree, Nonlinearity 20 (2007), no. 12, 2923--2937.

\bibitem {Khrennikov2}Khrennikov A. Yu., Non-Archimedean Analysis: Quantum
Paradoxes, Dynamical Systems and Biological Models, Kluwer Academics, 1997.

\bibitem {KKZuniga}Khrennikov A., Kozyrev S., Z\'{u}\~{n}iga-Galindo W. A.,
Ultrametric Equations and its Applications, Cambridge University Press, 2018.

\bibitem {Lerner-Missarov}Lerner \`{E}. Yu., Missarov M. D., Scalar models of
$p$-adic quantum field theory, and a hierarchical model, Theoret. and Math.
Phys. 78 (1989), no. 2, 177--184.

\bibitem {Levine}Levine L., Peres Yu., Laplacian growth, sandpiles, and
scaling limits, Bull. Amer. Math. Soc. (N.S.) 54 (2017), no. 3, 355--382.

\bibitem {Lieb et al}Lieb E. H., Loss M., Analysis. Second edition, American
Mathematical Society, Providence, RI, 2001.

\bibitem {Maes et al}Maes Ch., Redig F., Saada E., The abelian sandpile model
on an infinite tree, Ann. Probab. 30 (2002), no. 4, 2081--2107.

\bibitem {Mis}Missarov M. D., The continuum limit in the fermionic
hierarchical model, Theoret. and Math. Phys. 118 (1999), no. 1, 32--40.

\bibitem {M-P-V}M\'{e}zard M., Parisi G., Virasoro M. A., Spin glass theory
and beyond, World Scientific, 1987.

\bibitem {Mukhamedov-1}Mukhamedov, F., Ak\i n, H., Phase transitions for
$p$-adic Potts model on the Cayley tree of order three, J. Stat. Mech. Theory
Exp. 2013, no. 7, P07014, 30 pp.

\bibitem {Mukhamedov-2}Mukhamedov F., On the strong phase transition for the
one-dimensional countable state $p$-adic Potts model, J. Stat. Mech. Theory
Exp. 2014, no. 1, P01007, 23 pp.

\bibitem {Mukhamedov-3}Mukhamedov F., Saburov M., Khakimov O., On $p$-adic
Ising-Vannimenus model on an arbitrary order Cayley tree, J. Stat. Mech.
Theory Exp. 2015, no. 5, P05032, 26 pp.

\bibitem {R-T-V}Rammal R., Toulouse G., Virasoro M. A., Ultrametricity for
physicists, Rev. Modern Phys. 58 (1986), no. 3, 765--788.

\bibitem {Rodriguez-Zuniga}Rodr\'{\i}guez-Vega J. J., Z\'{u}\~{n}iga-Galindo
W. A., Taibleson operators, $p$-adic parabolic equations and ultrametric
diffusion, Pacific J. Math. 237 (2008), no. 2, 327--347.

\bibitem {Rodriguez-Zuniga-2010}Rodr\'{\i}guez-Vega J. J.,
Z\'{u}\~{n}iga-Galindo W. A., Elliptic pseudodifferential equations and
Sobolev spaces over p-adic fields, Pacific J. Math. 246 (2010), no. 2, 407--420.

\bibitem {Serfaty-Book}Serfaty S., Coulomb gases and Ginzburg-Landau vortices,
European Mathematical Society (EMS), Z\"{u}rich, 2015.

\bibitem {Sinai}Sina\u{\i} Ya. G., Theory of phase transitions: rigorous
results. Pergamon Press, Oxford-Elmsford, N.Y., 1982.

\bibitem {Sinclair}Sinclair Ch. D., Vaaler J. D., The
distribution of non-archimedean absolute Vandermonde determinants. Preprint, 2019.

\bibitem {Taibleson}Taibleson M. H., Fourier analysis on local fields,
Princeton University Press, 1975.

\bibitem {Torresblanca-Zuniga}Torresblanca-Badillo A., Z\'{u}\~{n}iga-Galindo
W. A., Non-Archimedean pseudodifferential operators and Feller semigroups,
p-Adic Numbers Ultrametric Anal. Appl. 10 (2018), no. 1, 57--73.

\bibitem {V-V-Z}Vladimirov V. S., Volovich I. V., Zelenov E. I., $p$-adic
analysis and mathematical physics, World Scientific, 1994.

\bibitem {Zuniga-Nonlinearity}Z\'{u}\~{n}iga-Galindo W. A., Non-Archimedean
reaction-ultradiffusion equations and complex hierarchic systems, Nonlinearity
31 (2018), no. 6, 2590--2616.

\bibitem {Zuniga-LNM-2016}Z\'{u}\~{n}iga-Galindo W. A., Pseudodifferential
equations over non-Archimedean spaces, Springer, 2016.
\end{thebibliography}
\end{document}